%% file: main.tex
\documentclass[a4paper,twocolumn,11pt,unpublished]{quantumarticle}
\pdfoutput=1
\usepackage[utf8]{inputenc}
\usepackage[english]{babel}
\usepackage[T1]{fontenc}
\usepackage{hyperref}
\usepackage[numbers]{natbib}
\usepackage{xcolor}
\usepackage{amsmath}
\usepackage{amsthm}
\usepackage{amssymb}
\usepackage{amsfonts}
\usepackage{graphicx}
\usepackage{bbold}
\usepackage{tikz}
\usetikzlibrary{
  arrows.meta,
  positioning,
  shapes.geometric,
  decorations.pathreplacing,
  calc,
  fit,
  backgrounds
}

\newtheorem{lemma}{Lemma}
\newtheorem{proposition}{Proposition}

\newcommand{\ket}[1]{| #1 \rangle}
\newcommand{\bra}[1]{\langle #1 |}
\newcommand{\braket}[2]{\langle #1 \mid #2 \rangle}
\newcommand{\HH}{\mathcal{H}}

\newtheorem{corollary}[lemma]{Corollary}
\theoremstyle{remark}

 \newcounter{mcomments}

 \newcounter{Dcomments}

\begin{document}

\title{Quantum circuit design via dynamic Pauli constraints}

\author{James R. Wootton}
\orcid{0000-0003-1943-5306}
\affiliation{Moth, Arlesheim, Switzerland}

\author{Merlin Incerti-Medici}
\orcid{0000-0001-8404-9036}
\affiliation{Moth, Arlesheim, Switzerland}

\author{Daniel Bultrini}
\orcid{0000-0002-2534-4200}
\affiliation{Moth, Arlesheim, Switzerland}

\author{Pierre Fromholz}
\orcid{0000-0001-6822-2337}
\affiliation{Moth, Arlesheim, Switzerland}

\begin{abstract}
We introduce the Motte model, a software-oriented model of quantum computation motivated by the practical constraints of near-term quantum
hardware. In this model, gates are specified by constraints expressed in terms of Pauli observables, with each disjoint layer of gates accompanied by a pairwise or $k$-local quantum state tomography of the device.
We prove that the model is equivalent to the coupling-graph-restricted
circuit model, and hence universal for BQP, with only polynomial overhead: emulating a depth-$D$ circuit on $N$ qubits
requires $\mathcal{O}(D^2 N \log N)$ elementary operations. Because the gate applied at each step is fixed exactly by the tomographic report, this emulation is robust to both the sampling and readout noise of the tomography. The model formalizes an idiom shared by existing work that ranges from quantum imaginary time evolution for the study of quantum systems to the use of quantum computers for procedural
generation in games. It therefore provides a natural interface for designing quantum software entirely in terms of physically observable quantities, relevant for the NISQ era and into fault-tolerance, with gate and decoherence noise lying outside the present scope.
\end{abstract}

\maketitle

\section{Introduction}
\label{sec:intro}

Quantum hardware has now been available from cloud services for a decade~\cite{cross2017openqasm}, allowing anyone to submit a quantum circuit to a wide range of QPUs. Nevertheless, these will remain limited to near-term \emph{NISQ} hardware for the next few years~\cite{preskill2018nisq,ibmroadmap2025,quantinuumroadmap2024}. Noisy gates and limited gate connectivity therefore remain important constraints for the design of quantum software.

One response is to base algorithms on what NISQ devices do \emph{naturally}: quantum dynamics.  Qubits are quantum systems that evolve and interact; if a task can be framed as a dynamical process on qubits, it can already be run on real hardware with the reasonable expectation that noise will not destroy the results~\cite{kim2023utility, boixo2018supremacy}. This is especially true when we work with the natural observables of the device: Pauli expectation values for single qubits or directly coupled groups of qubits on the QPU.

\begin{figure}[t!]
  \centering
  \input{fig-motte-schematic.tex}
  \caption{Generic structure of the
  Motte model. The \emph{user} works only in
  Pauli expectations; the classical \emph{Motte interface} deduces the matching
  gate; the \emph{QPU} executes it on the coupling graph~$G$. Toward an
  \emph{end target}, each layer measures the $\le k$-body Paulis (to precision
  $\pm\delta$), the user names new $k$-body targets (optionally a fractional
  power $f$ of the deduced unitary), and the interface appends the gate. The
  arrows circulate: the loop runs for $D$ layers, from the \emph{initial state}
  $\ket{0}^{\otimes N}$ to a \emph{final} $Z$-basis \emph{sampling}.}
  \label{fig:motte-schematic}
\end{figure}

This principle underlies multiple independent lines of work, with two serving as the primary inspiration. One is the non-variational algorithm for quantum imaginary time evolution (QITE)~\cite{motta2020qite}, in which ground and excited states of a Hamiltonian are found by repeatedly measuring Pauli expectation values and applying correction unitaries derived from them to drive toward a target state. The other is \texttt{QuantumGraph}~\cite{wootton2020map,fromholz2026backrooms}, a framework that builds quantum algorithms around pairwise tomography of a multi-qubit register and the manipulation of Pauli expectation values. In this approach, the user never specifies or receives state vectors.  Instead, they work entirely with quantities that can actually be estimated from hardware: $\langle X \rangle$, $\langle Y \rangle$, $\langle Z \rangle$ per qubit, and their pairwise correlators.  Gates are defined by where these values should go rather than by what matrix should be applied.

Additionally, there is the family of feedback-based quantum algorithms, beginning with FALQON~\cite{magann2022falqon} and extended to ground-state preparation~\cite{larsen2023feedback}. In these, each successive layer's gate parameter is set deterministically from a Pauli-observable measurement on the current state, with no classical optimization loop. Finally, there is the shadow-enhanced greedy quantum eigensolver (SEGQE)~\cite{erle2026segqe}, in which classical shadows of the current state are used in classical post-processing to estimate the energy reduction induced by each candidate gate from a discrete pool, with the gate of largest estimated reduction greedily appended at each step.

All four of these frameworks share the same underlying idiom: observe Pauli expectation values, compare with the target,
deduce and apply to the state the gates that get them there. Here we introduce a model of quantum computation that formalizes this idiom, establishes its computational standing, and argues for its particular suitability as an interface for creative quantum software. We refer to this as the \emph{Motte model}, illustrated in Fig.~\ref{fig:motte-schematic}.

The motivation for this model is the idea that it could provide a more intuitive framework for the design of certain quantum algorithms. It is therefore not a model in the sense of a novel approach toward designing quantum hardware, but rather for the design of quantum software. Applications of this model vary as the motivational examples show, from studies of quantum systems via QITE~\cite{motta2020qite}, to creative applications such as procedural content generation for computer games~\cite{wootton2020map,fromholz2026backrooms}.

This motivation is founded on the idea that local Pauli expectation values are a more manageable set of variables than a multi-qubit statevector~\cite{huang2020shadows}, and driving toward target values is a more intuitive heuristic for designing software than the application of rotations. Game designers do not think in $SU(4)$ when managing the stats of NPCs in a game~\cite{Seskir2022}. Musicians might ask producers to ``make this louder and more distorted'' rather than specifying a quantum channel. In such cases, a more imperative form of algorithm design can be useful, leaving the system to translate these commands into quantum operations. The Motte model formalizes exactly this interface.

This paper is organized as follows. The model is defined in Sec.~\ref{sec:model}, presenting it as a novel interface for quantum software. We detail the core iterative process behind the interface in which each step begins with a local Pauli tomography of the circuit so far, with the results provided to the user. The user then specifies new target values for selected Pauli expectation values, which implicitly define the next layer of operations applied to the circuit. In Sec.~\ref{sec:equiv} we then address the computational power of the model. We show that it is sufficiently expressive to emulate the circuit model with only polynomial overhead. This demonstrates that the Motte model is fully capable of any quantum computation. In Sec.~\ref{sec:applications} we expand on how QITE and \texttt{QuantumGraph} are used in practice. Unlike the constructions used to establish universality, which demonstrate computational power through an atypical use of the model, these examples illustrate the intended use of the interface.

\section{The Motte Model}
\label{sec:model}

\subsection{Definition of the model} \label{sec:def}

The Motte model is defined on a QPU of $N$ qubits with a connected coupling graph $G=(V,E)$, $|V|=N$. At the beginning of each computation, the $N$ qubits are initialized to the state $|0\rangle^{\otimes N}$. The QPU is assumed to be noiseless. The device is accessed through the following interface.

\begin{enumerate}

\item \textbf{Tomography:} A $k$-body tomography of all qubits can be performed at any time, with the resulting $\leq k$-body Pauli expectation values reported to the user (or more often, a classical process defined by the user). This is done up to a user-specified statistical noise (sampling error) $\delta$.

\item \textbf{Iterative circuit construction:} Gates are specified by providing a set of at most $k$-body target Pauli expectation values. Intuitively, the user names where the measured Pauli values should go, and the device supplies the gate that takes the current state there. Formally, the iteration $U$ applied to the circuit maps the eigenstates of the stated tomography to those of the density matrix that corresponds to the targets. The iteration uses the principal matrix power $U^f$ instead when a fractional power $f\in[0,1]$ is provided. Two-qubit gates are restricted to adjacent pairs $(j,k)\in E$.

\item \textbf{Measurement:} Projective measurement of all qubits in the $Z$ basis. Mid-circuit measurement is not supported.

\end{enumerate}

\subsection{Realization of the model}

There is no natural physical system that implements the above model. Nevertheless, it is worth noting that two approaches do come close to a physical implementation. One is imaginary-time evolution driven by quantum nondemolition measurements, a theoretical proposal which closes a measure-decide-apply loop using weak collective measurements~\cite{kondappan2022}. The other is quantum feedback control~\cite{wiseman2010}, experimentally realized in cavity quantum electrodynamics~\cite{sayrin2011}, which more generally conditions a Hamiltonian drive on a continuous measurement record.

We therefore consider that there is always a standard gate model quantum computer with the same number of qubits and coupling map under the hood, with the Motte model simply providing the interface. Within this context, we can then detail the specific implementations of the Motte model paradigm, as well as their complexity.

\subsubsection{Tomography}

The number of $k$-body Pauli expectation values on $N$ qubits is $\binom{N}{k}(4^k-1)$, but the number of measurement \emph{settings} required to estimate them is far smaller, since Paulis supported on disjoint regions commute and share a setting. The setting count and the shots per setting are derived in Sec.~\ref{sec:overhead}.

\subsubsection{Gates via eigenstate transformation}

The hardware-native gates (most often single- and two-qubit gates only) applied at each step are determined jointly by the user-supplied target expectation values and the results of the most recent tomography. From $k$-body expectation values one can construct all $\leq k$-qubit reduced density matrices (RDMs). Given tomography and target RDMs with distinct eigenvalues, a $k$-qubit unitary that maps the $j$th eigenstate of the former to the $j$th eigenstate of the latter is unique and well-defined up to the choice of phases for each vector of the initial eigenbasis.

For two-qubit gates performed in this way, it is necessary to specify the procedure when degeneracies exist in the eigenvalues of the target state.
The unitary described above would then map initial degenerate subspaces to target degenerate subspaces.
Each eigenstate of the tomography-inferred RDM is then projected onto each degenerate subspace and the set is reorthogonalized.
The order of the lists of initial and target states is then unique and defines the relevant unitary.

When constructing gates this way, the user-supplied expectation values can be regarded as a constraint on the existing state rather than a uniquely specified target state. For example, for a target of $\langle Z\otimes Z \rangle=+1$, the two eigenstates with highest eigenvalue would be rotated to the closest possible states that exist within this subspace. The new state would then be as aligned to the $\langle Z\otimes Z \rangle=+1$ target as possible.

\subsubsection{Gates via density matrix transformation}
The above-mentioned phase ambiguity can be resolved by working one level higher in the tomographic hierarchy, with $m$-qubit gates determined with a tomography on $k=m+1$ qubits. This form of gate is primarily motivated by the arguments to be made in the next section, and so a full specification of this form of gate can be found in Sec. \ref{subsec:dens}.

\subsubsection{Gates via QITE}

A third realization of the gate-specification procedure arises from QITE~\cite{motta2020qite}.
Consider a geometric $\ell$-local Hamiltonian $H = \sum_m h_m$, where each of the $M$ terms $h_m$ acts on at most $\ell$ neighbouring qubits of the coupling graph $G$.
Ground and thermal states of $H$ are approached through imaginary-time evolution
$e^{-\beta H}\ket{\Psi}/\|e^{-\beta H}\ket{\Psi}\|$,
which is approximated via a first-order Trotter decomposition into $n = \beta/\Delta\tau$ rounds, each consisting of one application of each non-unitary factor $e^{-\Delta\tau h_m}$ (a step).
The QITE algorithm replaces each such factor with a unitary derived from Pauli tomography of the current state~\cite{motta2020qite}, instantiating precisely the measure-decide-apply idiom of the Motte model: the user supplies a local Hamiltonian term $h_m$ and the number of steps $n$ acting as the inverse of the fraction $f$ of Sec.~\ref{sec:def}, and the interface determines and applies the gate that advances the state one imaginary-time step.

The gate applied at each step is a unitary $e^{-i\Delta\tau A_m}$ supported on a \emph{domain} of $k \geq \ell$ qubits surrounding the support of $h_m$, where
\begin{equation*}
    A_m = \sum_{I} a_m^{(I)}\, \sigma_I
\end{equation*}
is a Hermitian operator expanded in the Pauli basis $\{\sigma_I\}$ restricted to those $k$ qubits. The existence of $A_m$ is guaranteed by Uhlmann's theorem~\cite{uhlmann1976} if $k = \mathcal{O}(C^d)$ with $C$ the finite correlation length of the target state on a $d$-dimensional lattice~\cite{hastings2006}.

The coefficients $a_m^{(I)}$ are determined classically by solving a dependent linear system involving only the Pauli expectation values on the $k$-qubit domain around the support of $h_m$: the $k$-(body) tomography. Then, the circuit for $e^{-i\Delta\tau A_m}$ only involves the $k$ qubits in the domain.

Unlike the other two interfaces, the target of QITE is only implicitly set by the user through $h_m$ or $H$, but, like the other two, a single step reconstruction only involves a succession of $k$-qubit unitaries completely deduced from a preceding $k$-tomography.

When the available $k$ is smaller than $\ell$ or the correlation length due to hardware constraints, the linear system is solved over a truncated Pauli basis and the resulting gate is no longer an exact imaginary-time step.
This corresponds to the \emph{inexact} QITE regime of Ref.~\cite{motta2020qite}: $k = 1$ yields a mean-field-like approximation, and increasing $k$ gives successively better approximations.
This truncation is compatible with the Motte model: a computation with fixed $k < k_{\mathrm{exact}}$ remains internally consistent and continues to drive the state toward lower energy, though no longer along the exact imaginary-time trajectory.

\subsection{Differences with variational algorithms} 

As an iterative and adaptive approach to evolve toward a target, the Motte model has surface similarities to variational quantum algorithms (VQAs) such as the variational quantum eigensolver \cite{peruzzo2014variational} and the Quantum Approximate Optimization Algorithm (QAOA) \cite{farhi2014quantum}. All rely on a hybrid quantum-classical procedure to minimize a global cost function.

In general, VQAs utilize a parameterized circuit ansatz of depth $D$ which is run multiple times to evaluate the cost function and potentially also parameter gradients. A classical optimizer is then used to iteratively update gate parameters to minimize the cost function. In contrast, though the Motte model could similarly be considered as `optimizing' unitaries, they are constructed analytically rather than variationally through the mapping of the tomography to the Pauli constraints.   

Furthermore, in VQAs the optimization over the often high-dimensional and global parameter space is hindered by issues such as complex optimization landscapes and barren plateaus. These issues are also often worse in the presence of noise \cite{larocca2024review} and under the limitation of finite sampling, which then causes further issues for the classical optimizer. The Motte model instead trades the advantages and disadvantages of a classical optimizer and parameter-dependent complexity in favour of a complexity that increases with the number of iterations. This relocates rather than removes the difficulty: in place of barren plateaus, the analytic gate construction is sensitive to the conditioning of the tomographic RDMs, which in the $k=2$ construction degrades as their eigenvalue gaps $\gamma_{\min}$ shrink (App.~\ref{app:k2overhead}); the exact $k=3$ route avoids even this.

The question of whether VQAs could show quantum advantage is also still open~\cite{cerezoDoesProvableAbsence2025}, even though the ansatz behind QAOA is universal when the evolution times of the Hamiltonians are used as explicitly programmed instructions rather than the heuristic optimization of parameters~\cite{lloyd2018quantum}. The Motte model, by contrast, already subsumes QITE which is an instance aimed at ground- and thermal-state preparation for classically intractable Hamiltonians, a regime where quantum simulation is broadly expected to be advantageous~\cite{feynman1982simulating,lloyd1996universal} and where QITE achieves an exponential per-iteration reduction in space and time over exact classical imaginary-time evolution~\cite{motta2020qite}. As we now show, the model also admits this programmed-instruction universality. The present work itself makes no advantage claim. Throughout, the defining constraint of Sec.~\ref{sec:intro} holds: the user specifies and receives only Pauli expectation values, never state vectors or unitaries.

\section{Equivalence to the Circuit Model}
\label{sec:equiv}
The Motte model is useful as a model of quantum computation only if it has the computational power of the standard circuit model. To demonstrate that this is the case, we will now show that the Motte model on any connected coupling graph $G$ is computationally equivalent to the standard quantum circuit model on $G$~\cite{wootton2024circuit}, with polynomial overhead. This is done by showing that the circuit model can efficiently emulate the Motte model and vice-versa. Since SWAP routing allows any connected graph to emulate any other with polynomial overhead~\cite{Beals2013} (topology-dependent, $\mathcal{O}(N)$ in the worst case and polylogarithmic on well-connected graphs), the Motte model is universal for BQP on any connected graph. We call this step trivial only in that, once two-way emulation on a fixed graph is established below, routing is the sole remaining ingredient.

\subsection{Emulation of the Motte model by the circuit model} \label{sec:emulation_m2c}

The Motte model is imagined as an interface through which quantum computation can be performed, and a novel framework for constructing quantum software, rather than a new form of quantum hardware. As such, we consider that any Motte model device is an underlying circuit model quantum computer. The circuit model can therefore emulate the Motte model by construction.

The assumption that a model of quantum computing is in fact the circuit model under the hood is not unique to the Motte model. It follows naturally from the fact that noiseless gates are assumed for BQP, as well as the fact that all known practically realizable fault-tolerant methods use the circuit model with quantum error correction. This is most evident for the topological model, where the non-abelian anyons emerge directly from the error correction substrate of a quantum circuit~\cite{Chen2025}. For measurement-based quantum computation, fault-tolerance can be realized via emulation of the circuit model via a surface code based approach~\cite{Raussendorf2007}. For the adiabatic model, the requirement to maintain a gap that can become arbitrarily small is not compatible with noisy systems~\cite{Sarandy2005}, but emulation via a fault-tolerant circuit would be straightforward~\cite{farhi2000quantumcomputationadiabaticevolution}. In all cases, fault-tolerance comes from emulation by or emulation of the circuit model.

\subsection{Emulation of the circuit model by the Motte model}

We now determine if and when the Motte model interface is compatible with universal quantum computation, and the complexity overhead required. We adopt a strategy of `proof by hack'. This is a proof by construction in which we suppose that we were given a Motte model device but need to run a standard quantum circuit. How can we most straightforwardly hack the Motte model to emulate the circuit model?

Any circuit model computation can be represented in terms of a universal gate set of single- and two-qubit gates on a connected coupling graph~\cite{barenco1995elementary}. We can therefore restrict to these. Given a circuit of such gates, the question is then how to implement them through the Motte model interface.

Within the majority of this section, the process considered concerns three parties: the user, the underlying circuit model quantum computer, and the classical computer that interfaces between the quantum computer and the user. The latter two together constitute the Motte computer. Adding gates to the circuit consists of the following steps.
\begin{enumerate}
    \item The quantum computer runs a tomography of the circuit so far and supplies the results to the user.
    \item The user chooses a method with which the interface computer will deduce the required gates (typically the same for each step).
    \item Using their own classical computer, the user reconstructs the matrix representation of the $k$-qubit RDMs for all qubits that will be targeted by their desired gate $U$.
    \item The user's classical computer then applies the matrix representation of $U$ to the RDM and deduces the resulting tomography. This tomography is the new target supplied to the interface computer. Since this requires only classical computation on single- or two-qubit matrices, the computational cost is $\mathcal{O}(1)$ per matrix.
    \item The interface computer reconstructs the matrix representation of both the initial and target RDMs and diagonalizes them. The eigenvalues are used to identify pairs of corresponding initial and target eigenstates.
    \item The interface computer deduces the unitary $\tilde{U}$ which provides an equivalent transformation of eigenstates to $U$. It then transpiles the matrix representation of $\tilde{U}$ to quantum gates and adds it to the circuit.
\end{enumerate}
The details of equivalence with the circuit model depend on the method chosen by the user for the interface computer to reconstruct $U$, as well as the value of $k$.

We now first consider the case of gates via eigenstate transformation alone, and show that there is no straightforward path to emulation. We then show that gates via density matrix transformation alone are sufficient for a tomography with $k=3$. In addition, we show that combining gates via different methods at $k=2$ is sufficient to emulate the circuit model. We then consider a small extension of standard QITE to best fit within the Motte model paradigm, provide an explicit example of its implementation and discuss its universality.

\subsubsection{Gates via eigenstate transformation}

We begin with gates via eigenstate transformation at $k=2$ only (the current implementation of the \texttt{QuantumGraph} package) and will find that it does not by itself suffice for universality. Its specific shortfall is what motivates the constructions that follow.

As described above, single- and two-qubit gates can be added by taking the given single- or two-qubit tomography, deducing the resulting RDM and then transforming it via the matrix representation of the desired gate.

By diagonalizing both the initial and target RDMs, the interface computer will determine the initial eigenstates $|\alpha_j\rangle$ and target eigenstates $|\bar \alpha_j\rangle$. However, note that $|\bar \alpha_j\rangle = e^{i \theta_j} U |\alpha_j\rangle$: the deduced target eigenstates do not simply reflect the action of $U$, but will also have an arbitrary phase $e^{i \theta_j}$ due to the phase ambiguity when recovering from a density matrix. Equivalently we can say that
\begin{equation*}
    |\bar \alpha_j\rangle = U V |\alpha_j\rangle \,\,\,\, \forall j,
\end{equation*}
where $V = \sum_j e^{i \theta_j} \ket{\alpha_j}\bra{\alpha_j}$ is a unitary diagonal in the initial eigenbasis.

Using the method of eigenstate transformation, the interface computer constructs the proposed unitary $\tilde U$ with
\begin{equation*}
    \tilde U = \sum_j e^{i \theta_j} \ket{\bar\alpha_j}\bra{\alpha_j} = UV.
\end{equation*}
However, this is not $U$, but $U$ up to the effect of the arbitrary phases on each eigenstate. Since this will not be the desired operation in general, these ambiguous phases lead to an error which prevents a straightforward emulation of the circuit model. Whether more complex methods to achieve an emulation are possible in this case, or for gates via eigenstate transformation with higher values of $k$, is left for future work.

\subsubsection{Gates via density matrix transformation}
\label{subsec:dens}

Gates via density matrix transformation are specifically designed to resolve the phase ambiguity in the case of a user who wishes to hack the Motte model into applying a predefined unitary. As seen above, the action of an $m$-qubit unitary on $2^m$ orthogonal rays (rather than vectors) is not enough to reconstruct $U$. However, $2^m$ non-orthogonal rays are sufficient as long as they span the $m$-qubit space and the overlap graph defined by pairs of rays with non-zero overlap is connected.

Specifically, for any set of states $\{|\alpha_j\rangle\}$ of the tomographic RDM there are corresponding states $ |\bar \alpha_j\rangle = e^{i \theta_j} U |\alpha_j\rangle $ after $U$ and an arbitrary global phase are applied. Let $|\alpha_0\rangle$ be a state with non-zero overlap with all $|\alpha_j\rangle$. Then,
\begin{equation} \label{eq:phase}
    \frac{\langle\bar\alpha_0|\bar\alpha_j\rangle}{\langle\alpha_0|\alpha_j\rangle} = \frac{\langle\alpha_0| U^\dagger e^{-i \theta_0 } e^{i \theta_j }  U |\alpha_j\rangle}{\langle\alpha_0|\alpha_j\rangle} = e^{i (\theta_j - \theta_0)}.
\end{equation}
If the user has access to the vectors via the initial and target RDMs, the expression above can be computed. We will discuss below how we can obtain non-orthogonal state vectors with which the above terms can be computed (see also appendix \ref{app:simplex}).This then determines the relative phases $e^{i (\theta_j - \theta_0)}$ between each $U\vert\alpha_j\rangle$ and the reference $\vert\alpha_0\rangle$. $U$ can therefore be completely determined up to the irrelevant global phase $e^{i \theta_0}$.

Since eigenstates of RDMs are always orthogonal, obtaining a set of states with a connected overlap graph for the required $m$-qubit subspace necessitates moving beyond the $m$-qubit RDM. For the $(m+1)$-qubit RDM eigenstates will take one of two possible forms with respect to the bipartition between the $m$-qubit space and the extra qubit: either they will be product states, or they will be entangled. Product eigenstates will take the form $\ket{\psi_j} = \ket{\alpha_j} \otimes \ket{\beta_j}$, where $\ket{\alpha_j}$ acts on the $m$-qubit space and $\ket{\beta_j}$ on the extra qubit. Entangled eigenstates will yield a pair of Schmidt basis states $\ket{\alpha_{j,0}}$ and $\ket{\alpha_{j,1}}$ on the $m$-qubit space. After reordering of the eigenstates, comparing the initial and target $(m+1)$-qubit RDM will send
\begin{multline*}
\alpha_j \otimes \beta_j \mapsto \\
e^{i\theta_j} (U \otimes \mathrm{Id}) (\alpha_j \otimes \beta_j) = (U e^{i\theta} \alpha_j) \otimes \beta_j = \overline{\alpha}_j \otimes \beta_j. 
\end{multline*}
From the RDMs, the user obtains direct access to the $\alpha_j$ and $\overline{\alpha}_j$. If they are non-orthogonal, we can use them to determine the relative phases.
\color{black}
An explicit construction confirming that a non-orthogonal, connected, spanning frame always exists is given in App.~\ref{app:simplex}, where it serves as an existence witness rather than as an input to Eq.~\eqref{eq:phase}.

The $2^{m+1}$ eigenstates of the $(m+1)$-qubit RDM therefore yield a set of at least $2^{m+1}$ states on the $m$-qubit space, which will be uniquely defined and ordered as long as the eigenvalues and Schmidt coefficients are distinct. As eigenstates of a density matrix they will necessarily span the entire $m$-qubit space, and since there are more than $2^m$ they will necessarily not all be orthogonal. Since the only enforced orthogonalities are between the Schmidt basis states of the same eigenstate, all other pairwise overlaps are generically non-zero, ensuring the overlap graph is connected for generic density matrices.
 
The requirement for distinct eigenvalues and Schmidt coefficients, and the generic overlaps required to ensure a sufficiently large connected component, are not restrictive. This is because these eigenstates and eigenvalues must be present only in the stated tomography and supplied target rather than the actual state. Statistical noise will mean that the eigenstates of the tomography-inferred RDM are almost certainly not the exact eigenstates, while any degeneracy will almost certainly be lifted.
We can therefore assume that the eigenstates and Schmidt coefficients will always be distinct and the overlap graph is well-connected for the tomography-inferred RDMs without loss of generality. This will provide the range of states required. In the unlikely case that this does not occur, the tomography can simply be repeated.

To summarize, the process of reconstructing an $m$-qubit gate via an $(m+1)$-qubit tomography is for the interface computer to first construct and diagonalize the initial and target $(m+1)$-qubit RDMs. It then determines either the product basis state or Schmidt basis states of each eigenstate on the $m$-qubit space on which the gate is applied. The corresponding initial and target pairs of states are identified via the distinct eigenvalues and Schmidt coefficients. The target states differ from the initial states by the action of $U$, but also by an arbitrary phase due to their reconstruction from an RDM. These phases can be determined up to a trivial global phase by Eq.~\eqref{eq:phase}, using one state as a reference. The phases can then be removed from the final states held by the interface computer, allowing it to reconstruct the exact $U$ up to an overall global phase. Equation~\eqref{eq:phase} is the constructive recovery applied to the non-orthogonal rays above; that the reconstructed gate equals $U$ exactly, independently of the tomographic accuracy $\delta$ and of the arbitrary phases assigned to the eigenvectors, is established in Prop.~\ref{prop:gatecorrect} (App.~\ref{app:gatecorrect}).

For gates constructed in this way, emulation becomes straightforward for $k=3$. This is because two-qubit gates ($m=2$) can be exactly reproduced with $k=m+1=3$, allowing the user to simply write in the circuit directly.

\subsubsection{Combined gates at \texorpdfstring{$k=2$}{k=2}} \label{sec:combined}

Universality can be achieved via arbitrary single-qubit gates and a single entangling two-qubit gate~\cite{brylinski2002universal}. This can be achieved via the Motte model for $k=2$ if gates via eigenstate transformation and gates via density matrix transformation are combined. The latter allows perfect implementations of any single-qubit gates by using a second ancilla qubit (see Sec.~\ref{subsec:dens}), providing the required single-qubit gates. This second qubit is passive: it is used only to enlarge the tomographic RDM, and the constructed two-qubit unitary acts as $U\otimes I$ across the partition, so no uncomputation is required. Two-qubit gates then use the eigenstate transformation method, which fixes the entangler only up to the phase ambiguity of Sec.~\ref{sec:emulation_m2c}, in an eigenbasis that is itself perturbed by statistical noise. If we allow a Motte model device to report to the user the unitaries it applies, the phase ambiguity is resolved: the reported unitary is a fixed, known, generically entangling two-qubit gate, which supplies the required entangling resource. The report does not remove the statistical perturbation of the eigenbasis, however: the reported gate still deviates from the intended one by an amount set by the tomographic accuracy $\delta$ and the spectral gap of the reduced density matrix, an error that accumulates over depth and is quantified in Sec.~\ref{sec:overhead} and App.~\ref{app:k2overhead}. Some relative phases can be optionally corrected using the single-qubit gate perfect implementation.

This becomes more precise if we allow for transformations to be set as a template and reused. For example, if the initial state is $\langle Z\otimes I\rangle=\langle I\otimes X\rangle(=\langle Z\otimes X\rangle)=1$ and the target is $\langle Z\otimes Z\rangle=\langle X\otimes X\rangle=1$, the transformation will have the canonical effect of the CNOT with variations due to tomography and phase. Allowing the user to repeat this whenever a similar effect is desired allows for a fixed and known entangled gate (up to the residual statistical jitter), for which constructive techniques for universality are known~\cite{bremner2002}. Like the eigenstate-transformation hack of Sec.~\ref{sec:emulation_m2c}, this remains a proof-by-construction: the user is hijacking the Motte interface to obtain a known unitary rather than using the model as intended.

\subsubsection{State preparation universality for gates via QITE}

Since equivalence with the circuit model has already been demonstrated with $k=2$ and $k=3$ for the above methods, for gates via QITE we will instead consider a related problem: efficient state preparation~\cite{ward2009}. Indeed, this is a problem more related to the intended usage of the Motte model.

By construction, QITE with domain $k$ can prepare any state that arises as a ground state of an $\ell$-local Hamiltonian whose imaginary-time trajectory from the initial state stays within correlation length $k$. When $k=2$, this imposes both that the correlation length and Hamiltonian locality satisfy $C,\ell \leq  2 $. For states that have larger $C$, inexact QITE is the only option and convergence to the true ground state is not guaranteed. Furthermore, though each round is guaranteed to be polynomial, the number of rounds required to reach convergence may not be efficient. QITE is therefore not a guaranteed and efficient method for arbitrary state preparation as soon as $k< N$ (the standard regime for useful QITE). 
 
However, this statement applies only to QITE as described in Ref.~\cite{motta2020qite}, for which the Hamiltonian is fixed. Instead, the Motte model also allows for the target to change from step to step. For the case of QITE, this means applying the method to a constantly adapting Hamiltonian. We call this adapt-H QITE (unrelated and not to be mistaken with adapt-QITE~\cite{Gomes2021}).

Both in QITE and adapt-H QITE, one infinitesimal step of the process moves $|\psi\rangle$ along the direction $-(h_m - \langle h_m\rangle)|\psi\rangle$: it is the projection of $h_m|\psi\rangle$ onto the tangent space at $|\psi\rangle$, and the direction is the one that minimizes $h_m$ by varying correlations up to a distance of $k$. But even considering only the $k(>1)$-domain around $h_m$, Pauli-string $h_m$ is always multiply degenerate as soon as $h_m$ involves more than one qubit. For standard inexact QITE the set of terms $h_m$ is fixed, and each step performs steepest descent of $\langle h_m\rangle$ on projective Hilbert space (App.~\ref{app:AHQITEU}) using only the descent direction that its domain-$k$ update can realize. Such a confined descent stalls wherever that realizable direction vanishes, and the resulting critical point need not be the ground state: where $h_m$ is degenerate, the directions that would lift the degeneracy and lower the energy further call for correlations beyond the domain $k$, so they fall outside the realizable set and the trajectory halts short of convergence. By now varying $h_m$ over all 2-local Hermitian operators on adjacent pairs, the issue does not occur by construction, at the cost of increased complexity: finding other useful $h_m$ to adjust the adapted imaginary time evolution trajectory and avoid or exit the problematic inexact non-converging trajectories. We consider that the user is providing all useful $h_m$, effectively ignoring the extra complexity cost of finding them in the following. Allowing for all $h_m$, the set of reachable infinitesimal directions is
\begin{equation}
    \text{span}\left\lbrace (P - \langle P\rangle)|\psi\rangle \right\rbrace,
\end{equation}
where $P$ are the 2-local Pauli operators on edges of $G$. For a connected graph $G$ and a generic state $|\psi\rangle$, this span is $\mathcal{O}(N)$-dimensional and does not exhaust the entire tangent space at $|\psi\rangle$ in projective Hilbert space. Instead, universality follows from Lie-bracket closure over composed steps. A more detailed and rigorous demonstration is provided in App.~\ref{app:AHQITEU}.
 
\subsubsection{Worked example: GHZ state preparation}

As an example of the Motte model emulating a quantum circuit, consider the preparation of a three qubit GHZ state via a Hadamard and two chained CNOT gates.

Specifically, we consider the target state for which
\begin{multline*}
    \langle Z\otimes Z\otimes I \rangle = \langle Z\otimes I\otimes Z \rangle \\ = \langle I\otimes Z\otimes Z \rangle = \langle X\otimes X\otimes X \rangle = 1,
\end{multline*}
with all other expectation values zero. This is done from the initial $\lvert 000 \rangle$ state, for which
\begin{equation*}
    \langle P_0\otimes P_1 \otimes P_2 \rangle = 1,
\end{equation*}
for any $P_j = I$ or $Z$ and with all other expectation values zero.

We will specifically consider gates via density matrix transformation, and so use a $k=m+1$ qubit tomography to apply an $m$-qubit gate. Since the tomography is exact only up to an accuracy $\delta$, the user will not be given the ideal values as above, but rather approximations that can be assumed to be unique for each Pauli.

The first gate to recover is the Hadamard on the rightmost qubit. The two-qubit tomography of the resulting state (obtained by analytic computation or performed on a circuit model computer) verifies:
\begin{equation*}
\forall P \in \{I, X, Y, Z\}, \quad \left\lbrace \begin{array}{l}
     \langle P\otimes Z \rangle \leftrightarrow\langle P\otimes X \rangle, \\
     \langle P\otimes Y \rangle \rightarrow -\langle P\otimes Y \rangle 
\end{array}\right.
\end{equation*}
The user therefore manipulates the tomographic expectation values accordingly and uses the result as an input.

Since the tomographic expectation values are all unique, it is evident to the interface computer that the user applied the transformation above, and therefore a Hadamard gate, and adds this gate to the circuit. The swap $\langle X\rangle\leftrightarrow\langle Z\rangle$ with $\langle Y\rangle\to-\langle Y\rangle$ on the rightmost qubit is precisely the conjugation action of the Hadamard ($HXH=Z$, $HYH=-Y$), so the gate can be read off from this signature. Indeed, for the case of Clifford gates, inspection of the expectation values is sufficient to deduce the gate without the need for diagonalization of the RDMs.

The tomography is then rerun, with the results supplied to the user. Again the user receives a set of uniquely valued expectation values, all an approximation up to a $\delta$ error on their true values. They apply the transformations corresponding to a CNOT controlled on the rightmost qubit and targeted on the middle qubit, such that
\begin{equation*}
    \forall P \in \{I, X, Y, Z\}, \quad \langle P\otimes I \otimes X \rangle \leftrightarrow \langle P\otimes X \otimes X \rangle.
\end{equation*}
Again the interface computer uses this to deduce the required transformation and appends the required gate to the circuit. The result is a circuit to construct the Bell pair $(\lvert 00 \rangle+\lvert 11 \rangle)/\sqrt{2}$ on the middle and rightmost qubit.

After repeating the tomography, the same process is used to apply a CNOT controlled on the middle qubit and targeted on the leftmost. This is done by the user applying transformations such that
\begin{equation*}
    \forall P \in \{I, X, Y, Z\}, \quad \langle I \otimes X \otimes P \rangle \leftrightarrow \langle X \otimes X \otimes P \rangle, 
\end{equation*}
and the interface computer again deducing and appending the required transformation.

The same process would be applied for the case of a four-qubit GHZ state and higher. The fact that the number of qubits would exceed $k=3$ would not be an impediment since $k=m+1$ is sufficient to uniquely determine an $m$-qubit gate.

\subsection{Polynomial overhead}
\label{sec:overhead}
The cost of emulating a circuit through the Motte model is an overhead relative to direct execution, arising from the tomography that precedes each layer of gates. We assume throughout this section that the problem graph coincides with, or is contained in, the chip coupling graph $G$, so that every gate is native to $G$ and no routing is required. If not, the routing overhead must be added to the bound provided. $D$ is then the depth of the circuit being emulated.

The circuit is built one layer at a time. Before each of the $\mathcal{O}(D)$ depth increments, a system-wide $k$-body tomography is performed, specified by $S(N,k)$ measurement settings. Estimating every $\le k$-body Pauli expectation value to additive accuracy $\delta$ requires $\mathcal{O}(\delta^{-2}\,k\log N)$ shots per setting, by Hoeffding's inequality~\cite{hoeffding1963} together with a union bound over the at most $\mathcal{O}(N^k)$ estimated values. The cost of the emulation relative to a single execution of the circuit is therefore the product of the $\mathcal{O}(D)$ rounds, the $S(N,k)$ settings, and the shots per setting,
\begin{equation}\label{eq:overhead}
  \mathrm{Overhead} = \mathcal{O}\!\big(D\,S(N,k)\,\delta^{-2}\,k\log N\big).
\end{equation}

The number of settings depends on the connectivity of $G$, and is far smaller than the number of observables because Paulis on disjoint regions commute and share a setting. Reconstructing all Paulis on a single $k$-qubit region takes $3^k$ settings, one per assignment of a measurement basis to its $k$ qubits. On a graph of bounded degree $\Delta$ there are $\mathcal{O}(N)$ connected $k$-qubit regions, but each overlaps only $\mathcal{O}(k\Delta^{k-1})$ others, so they partition into $\mathcal{O}(k\Delta^{k-1})$ groups of mutually disjoint regions that are measured together. The number of settings is therefore
\begin{equation}\label{eq:settings}
  S(N,k)=3^k\,\mathcal{O}(k\Delta^{k-1}).
\end{equation}
When accounting for non-qubit-wise Pauli-grouping~\cite{verteletskyi2020clique}, or more advanced techniques like classical shadows~\cite{huang2020shadows}, this bound can drop even further, at a classical computing cost and extra imprecisions. We do not need these refinements to show universality, so we ignore them here. On a fully connected graph (i.e.\ $\Delta=N$) the regions no longer have bounded overlap: for $k=2$, $\mathcal{O}(\log N)$ settings are required~\cite{garcia2020pairwise}, and more generally $\mathcal{O}(f(k)\log N)$ with $f(k)=e^{\mathcal{O}(k)}$~\cite{cotler2020overlapping}. Such connectivity can be relevant for some problems and is possible in some cold atoms- or ions-based hardware (for which they therefore likely display better performances than their competitors). For simplicity's sake and because it does not change the qualitative conclusions, we do not consider fully connected or even dense graphs in the rest of this paper, we use only Eq.~\eqref{eq:settings} below; present solid-based QPU coupling maps (linear, heavy-hex, grid) have bounded degree $\Delta=\mathcal{O}(1)$, and we take $k$ fixed.

For gates via density-matrix transformation, taken here at $k\geq 3$, the gate applied to the device is fixed exactly by the tomographic report, independently of $\delta$ (this gate-correctness property is established in Appendix~\ref{app:gatecorrect}), so a fixed accuracy $\delta=\mathcal{O}(1)$ suffices and the statistical noise of the tomography does not propagate into the emulated state. A single execution of the depth-$D$ circuit costs $\mathcal{O}(ND)$ elementary operations; the work complexity of the emulation is this multiplied by the overhead~\eqref{eq:overhead}. Substituting the bounded-degree setting count~\eqref{eq:settings} with $\delta=\mathcal{O}(1)$, we find the work-complexity bound:
\begin{equation}\label{eq:cost}
\begin{aligned}
  \mathrm{Work} &= \underbrace{\mathcal{O}(ND)}_{\text{one execution}}\times
    \underbrace{\mathcal{O}\!\big(D\,S(N,k)\,\delta^{-2}\,k\log N\big)}_{\text{overhead}} \\
  &= \mathcal{O}\!\big(3^k\,k^2\,\Delta^{k-1}\;D^2\,N\log N\big).
\end{aligned}
\end{equation}
The two factors of $D$ are the depth of each tomographed circuit and the number of layers; $\Delta$ is the degree of the coupling graph; the leading $N$ is the register width; and the $\log N$ is the per-setting shot overhead derived above. For fixed $k$ and bounded $\Delta$ the prefactor $3^k k^2\Delta^{k-1}$ is a constant independent of $N$, so the work complexity is
\begin{equation}\label{eq:cost-collapsed}
  \mathcal{O}\!\big(D^2\,N\log N\big).
\end{equation}
This is polynomial, so BQP membership is preserved.

This overhead is the price of emulating the circuit model through the Pauli interface; it is not a performance penalty on algorithms designed natively in the model. At $k=2$, the entangling gate must instead be obtained by eigenstate transformation, whose accuracy is limited by the spectral gap of the reduced density matrix; the resulting error accumulates over the circuit, and reaching a fixed target fidelity $\epsilon$ costs $\mathcal{O}\!\big(D^4 N^3\log N/(\epsilon^2\gamma_{\min}^2)\big)$ operations, with $\gamma_{\min}$ the smallest gap encountered (Appendix~\ref{app:k2overhead}). The exact $k=3$ construction avoids this.

A final component of the cost is classical, borne by the interface computer, and we account for it here only to confirm that it does not dominate. The interface manipulates exclusively $2^k$-dimensional objects: it reconstructs the local reduced density matrices from the reported expectation values, diagonalizes them, transforms them by the user-supplied target, and, for gates via QITE, solves a linear system over the Paulis of the $k$-qubit domain. It never instantiates the $2^N$-dimensional statevector, since the tomography of the circuit so far is performed by the device rather than simulated classically. Each such operation costs $e^{\mathcal{O}(k)}$ for the $\mathcal{O}(N)$ gates of a layer, plus $\mathcal{O}(N\log N)$ to aggregate the shots, so the classical cost over all $D$ layers is $\mathcal{O}\!\big(D\,N\,e^{\mathcal{O}(k)}\big)$ in time and $\mathcal{O}\!\big(N\,4^k + ND\big)$ in space. For fixed $k$ this is polynomial in $N$ and $D$ and is dominated by the quantum overhead of Eq.~\eqref{eq:cost-collapsed}, which is why we disregard it in the remainder of the paper. It would turn exponential only if $k$ were allowed to grow with $N$, as in exact QITE at a diverging correlation length, a regime that inflates the quantum cost in the same step.

\subsection{Effects of noise}

In considering noise we must distinguish between \emph{tomographic noise} and \emph{state-evolution noise}. Tomographic noise has two components. \emph{Shot noise} is the statistical sampling error $\delta$: expectation values are inexact for any finite number of samples, but it averages down as the number of shots grows and, as we noted in Sec.~\ref{subsec:dens}, can even be beneficial by lifting degeneracies. \emph{Readout noise} is a systematic measurement error that causes 0 to be misreported as 1 and vice-versa; unlike shot noise it does not average away (but it can be mitigated~\cite{bravyi2021mitigating}). Both components preserve \emph{gate correctness} exactly; their effect falls on \emph{intention correctness}, two concepts we now define. State-evolution noise covers effects such as gate imperfections and decoherence.

Gate correctness concerns whether the gate applied in the circuit is the one which the user requested, given the tomography results they were given and the targets they supplied. Intention correctness concerns whether the gate applied does what the user wants.

An important property of the Motte model is that tomographic noise affects only intention correctness, and not gate correctness: even if the tomography supplied to the user is misleading due to noise, if the corresponding targets differ by a unitary $U$, then for gates via density-matrix transformation that $U$ is recovered exactly by the interface computer and applied to the QPU, regardless of the tomographic accuracy (Appendix~\ref{app:gatecorrect}). For gates via eigenstate transformation the recovery is exact only up to the phase ambiguity of Sec.~\ref{sec:emulation_m2c}.

Both forms of noise do have a deleterious effect on intention correctness. An extreme example would be that the state is already that desired by the user, but the tomography incorrectly reports a false initial state. The user would then supply the target expectation values and gates would be applied to the QPU accordingly. These then have the unintended effect of moving the state away from the target.

Though this work focuses on matters of complexity and universality, and so neglects state-evolution noise for the most part, it is worth noting that the Motte model can be expected to have a degree of robustness against it. Since the intended use-case is to continually update the user's representation of the state and drive toward a target, it allows the computation to be continually corrected even when noise steers it away. The idealized model, with its exact gate correctness, presumes a noiseless device and so is realized in full only on fault-tolerant hardware. On NISQ hardware gate correctness still holds against the tomographic stage (sampling and readout noise alike), and the re-aiming loop confers partial robustness to state-evolution noise, so use on NISQ hardware remains appropriate.

\subsection{Summary}

Let us consider the standard Motte model to be that for which all three methods of gate addition are available to the user. Gates via density-matrix transformation render the model universal for BQP at $k=3$: every two-qubit gate is reproduced exactly, and, by the gate-correctness property, the emulation is independent of the statistical accuracy $\delta$ of the intermediate tomography. When the problem graph is native to a coupling graph of bounded degree, each layer's tomography requires a number of measurement settings that is constant in $N$ for fixed $k$, and the total cost is $\mathcal{O}(D^2 N\log N)$ elementary operations.

Universality is also reached at $k=2$ by combining gates via density-matrix transformation, which implement the single-qubit gates exactly, with gates via eigenstate transformation, which supply the entangling gate up to a known unitary. The entangling gate then carries a statistical error of order $\delta$, so reaching a fixed target fidelity requires $\delta$ to decrease polynomially with $N$ and $D$. For the variants in which only gates via eigenstate transformation, or only gates via (non-adapt-H) QITE, are allowed, the value of $k$ at which full circuit-model equivalence is achieved is unknown.

\section{Applications of the Motte Model}
\label{sec:applications}

\subsection{Quantum imaginary time evolution}

The QITE algorithm of Ref.~\cite{motta2020qite} is itself an application of the Motte model. QITE finds
ground and thermal states of a Hamiltonian $H=\sum_m h_m$
by simulating imaginary-time evolution, requiring exponentially less
space and time per iteration than classical ITE and avoiding
the deep ancilla-heavy circuits needed by phase estimation.  Each step
drives the state toward the ground state via a small correction unitary
derived from current Pauli expectation values. It is used mainly in chemistry and nuclear physics~\cite{Yeter-Aydeniz2019, Gomes2020, Nishi2021}.

\subsection{Quantum-native procedural generation}

\texttt{QuantumGraph}~\cite{wootton2020map} demonstrated the same
idiom in a creative context.  To procedurally generate geopolitical
maps in the style of \emph{Civilization}, one qubit is assigned per
nation.  The expectation values $\langle X\rangle$, $\langle
Y\rangle$, $\langle Z\rangle$ encode policy dimensions (aggression,
expansion, defense); the Bloch-sphere constraint $\sum_i\langle
\sigma_i\rangle^2\leq 1$ automatically enforces a policy budget
without explicit classical rules.  Pairwise correlators encode
inter-nation relationships, and the monogamy of entanglement ensures
that nations in conflict concentrate decisions on each other.  Between
turns, quantum gates update the Pauli values; the final map is sampled
from the projective measurement distribution.  This is precisely the
Motte model interface, predating its formal statement.  The same
approach has since been applied to level generation in the game
\emph{Quantum Backrooms}~\cite{fromholz2026backrooms}.

Both applications have been demonstrated on real quantum hardware at scales exceeding 50 qubits.
Examples include the map-generation algorithm of \texttt{QuantumGraph} that was executed on the 53-qubit IBM Rochester device in 2020~\cite{wootton2020map}, and the implementation of \emph{Quantum Backrooms} on the \texttt{ibm\_miami} device in 2026~\cite{fromholz2026backrooms}.
These demonstrations establish that the Motte model idiom is not merely a theoretical construct: it has been exercised on state-of-the-art hardware from the early NISQ era to the present day.

\subsection{A shared structure: control by fractional steps}

Beyond sharing a vocabulary, QITE and \texttt{QuantumGraph} share a concrete operational structure the model makes precise. Both perform a $k$-body tomography at each step and derive their next gates from its result, and both advance the state by a \emph{fractional} amount: the principal power $U^f$, $f\in[0,1]$, of Sec.~\ref{sec:def}. In QITE this fraction is the inverse step count, $f=1/n$ (equivalently $\Delta\tau=\beta f$); in \texttt{QuantumGraph} it is the partial gate applied between turns. The fraction $f$ is thus the common control of both algorithms, smaller $f$ buying a more faithful trajectory at the cost of more rounds.

Because each gate is derived from a preceding tomography, both inherit the model's \emph{gate correctness} (App.~\ref{app:gatecorrect}): the gate applied is the one dictated by the tomographic picture and the supplied target, not a noisier approximation of it, so the construction compounds no error of its own and the measure-decide-apply loop re-aims at every step. This faithfulness is exact and $f$-independent when gates are built by the density-matrix route ($k\ge3$); for QITE and \texttt{QuantumGraph} it holds in the well-defined $f\to0$ limit, with the admissible step size set by the relevant spectral gap. The resulting robustness, an account of why QITE tolerates the noise in its tomography and why \texttt{QuantumGraph}'s tomography-driven updates remain stable, is a property the model isolates and proves, not one either algorithm was originally formulated to possess.

\section{Conclusion}
\label{sec:conclusion}

The Motte model is a formally universal, BQP-equivalent model of
quantum computation whose defining feature is that target expectation values are specified
and state information is handled entirely in terms of Pauli
expectation values, quantities that are both physically natural and
hardware-accessible.  Though dependent on tomography to add gates to the circuit, a key property which we call \emph{gate correctness} is that the unitary applied by the device is determined by the tomographic report in a well-defined way (exactly at $k=3$, up to a reported phase ambiguity at $k=2$). The iteration is therefore self-correcting against tomographic-stage noise, statistical sampling plus moderate readout error, though not against state-evolution noise in the underlying gates and dynamics, which lies beyond the present scope. The price is a polynomial overhead relative to bare circuit execution, a consequence of mandatory per-depth tomography.

The model formalizes an idiom that has already produced working
algorithms, namely QITE for quantum chemistry and \texttt{QuantumGraph} for
procedural generation. Together they provide the conceptual framework needed to
extend this idiom as hardware scales toward the fault-tolerant era.

Beyond the two motivating applications described here, concrete use-cases 
for the Motte model remain sparse. This is less a reflection of the model's 
limitations than of the hardware era in which its idiom was developed: 
\texttt{QuantumGraph} was demonstrated on a 53-qubit device at a time when 
such devices were exceptional, but interesting applications of the 
interface would require a large range of Pauli expectation values to work from. That situation has now changed. QPUs of 50 or even 100 qubits and beyond 
are routinely available to all via cloud services, and devices of this scale are 
being operated as a matter of course. The Motte model provides a principled interface 
for exploring what can be done with this hardware: one whose defining properties, 
hardware-native Pauli constraints, gate correctness under tomographic noise, and a
vocabulary of physically observable quantities, are well matched to the devices now available. 
The search for applications will now begin in earnest.

\section*{Acknowledgments}
\label{sec:acknowledgements}
The authors would like to thank Spencer Topel and João Ferreira for discussions and for exploring use-cases of \texttt{QuantumGraph}. JRW would like to thank Junye Huang for discussions. The name of the model honors Motta \emph{et al.}~\cite{motta2020qite}, whose QITE algorithm is one of the primary motivations for the model, as well as the name of the company where this line of work continues: \emph{Moth}.

\section*{Data Availability}
No datasets were generated or analyzed in this work. The quantum imaginary time evolution algorithm that motivates the model is described in Ref.~\cite{motta2020qite}, and the \texttt{QuantumGraph} framework is publicly available at Ref.~\cite{quantumgraph}.

\section*{Author Contributions}
The authors' contributions are described following the CRediT taxonomy. \textbf{Conceptualization:} J.R.W. \textbf{Methodology:} J.R.W., M.I.-M., P.F. \textbf{Formal analysis:} M.I.-M., P.F. \textbf{Software:} J.R.W., P.F. \textbf{Investigation:} all authors. \textbf{Validation:} M.I.-M., P.F. \textbf{Visualization:} P.F. \textbf{Writing - original draft:} all authors. \textbf{Writing - review \& editing:} all authors. \textbf{Supervision:} J.R.W., P.F. \textbf{Project administration:} J.R.W., P.F. \textbf{Funding acquisition:} J.R.W. \textbf{AI/LLM use:} the authors used Anthropic's Claude (Claude Code, Claude Opus~4 family) to assist with \LaTeX{} editing and prose revision; all output was reviewed and verified by the authors, who take full responsibility for the contents of this manuscript.

\section*{Conflict of Interest}
All authors are full-time employees of Moth, which develops its own versions of \texttt{QuantumGraph} and the QITE algorithm discussed in this work. Some of the authors additionally hold a minor equity interest in the company. The authors declare no other conflicts of interest.

\bibliographystyle{quantum}
\bibliography{refs}

\appendix

\section{Overhead of the \texorpdfstring{$k=2$}{k=2} entangling gate}\label{app:k2overhead}

At $k=2$ the entangling gate is obtained by eigenstate transformation: the interface pairs the eigenvectors of the tomography-inferred initial and target reduced density matrices by eigenvalue. Writing the reported matrix as $\tilde R=\rho+E$ with $\lVert E\rVert=\mathcal{O}(\delta)$, the Davis-Kahan $\sin\Theta$ theorem~\cite{daviskahan1970} bounds the rotation of an eigenvector whose eigenvalue is separated from the rest of the spectrum by a gap $\gamma$: its deviation is $\mathcal{O}(\lVert E\rVert/\gamma)$. With $\gamma_{\min}$ the smallest such gap encountered along the construction, the applied entangling gate therefore differs from the intended one by $\mathcal{O}(\delta/\gamma_{\min})$ in operator norm.

Unitary errors compose sub-additively, so over the $M=\mathcal{O}(ND)$ entangling gates of the emulated circuit the total error is $\mathcal{O}(ND\,\delta/\gamma_{\min})$. Imposing a target fidelity $\epsilon$ fixes $\delta=\mathcal{O}(\epsilon\,\gamma_{\min}/ND)$, hence $\mathcal{O}\!\big((ND/\epsilon\gamma_{\min})^2\,k\log N\big)$ shots per setting. The work complexity is again $\mathcal{O}(ND)$ times the overhead,
\begin{equation}\label{eq:k2cost}
\begin{aligned}
  \mathrm{Work}_{k=2} &= \underbrace{\mathcal{O}(ND)}_{\text{one execution}}\times
    \underbrace{\mathcal{O}\!\big(D\,S(N,2)\,\delta^{-2}\,k\log N\big)}_{\text{overhead}} \\
  &= \mathcal{O}\!\left(\frac{D^4 N^3\log N}{\epsilon^2\,\gamma_{\min}^2}\right),
\end{aligned}
\end{equation}
up to the $k$-dependent prefactor. This bound holds in the perturbative regime $\delta\ll\gamma_{\min}$; the smallest gap $\gamma_{\min}$ is a property of the emulated trajectory and may in principle be small, in which case the eigenvector pairing is no longer well conditioned. The exact $k=3$ construction of Appendix~\ref{app:gatecorrect} is free of this dependence.

\section{Deterministic construction of the \texorpdfstring{$N$}{N}-simplex}\label{app:simplex}
Section III.B.2 of the main text asserts that, given $2^m$ orthonormal
vectors $\{\ket{\alpha_j}\}$ on the $m$-qubit space, one can construct
$2^m$ unit vectors with non-zero pairwise overlap and explicitly known
relative phase, using only one extra Hilbert-space dimension. This note
gives an explicit construction: an exact unitary $W\in U(2^m+1)$ followed
by a projection back onto the $m$-qubit space. The resulting $2^m$ unit
vectors lie in the original $m$-qubit space, have pairwise overlap
$-1/N$ with $N=2^m$, and together with the projection of the ancilla form
the vertex set of a regular $N$-simplex inscribed in the unit sphere of
$\HH_N$.

\subsection{Setup and notation}

Let $N \equiv 2^m$. Consider the Hilbert space $\HH_{N+1}$ of dimension
$N+1$, with a fixed orthonormal basis
\begin{equation}
    \{\ket{\alpha_1},\dots,\ket{\alpha_N},\ket{\beta}\}.
\end{equation}
We identify the $m$-qubit Hilbert space with the subspace
\begin{equation}
    \HH_N \equiv \operatorname{span}\{\ket{\alpha_1},\dots,\ket{\alpha_N}\},
    \qquad
    \HH_{N+1} = \HH_N \oplus \mathbb{C}\ket{\beta}.
\end{equation}
Let $P_N \equiv \mathbb{1}_{N+1} - \ket{\beta}\bra{\beta}
              = \sum_{j=1}^{N} \ket{\alpha_j}\bra{\alpha_j}$
be the orthogonal projector onto $\HH_N$~\footnote{Rigorously, we should write $\ket{\alpha_j} \otimes \mathbb{1}$ or $\mathbb{1}_{N}\otimes\ket{\beta}$ on occasions. We omit them for readability.}.

Three auxiliary vectors will be used throughout:
\begin{align}
    \ket{s}    &\equiv \frac{1}{\sqrt{N}}\sum_{j=1}^{N}\ket{\alpha_j}
                \in \HH_N,  \\
    \ket{v}    &\equiv \sum_{j=1}^{N}\ket{\alpha_j} + \ket{\beta}
                = \sqrt{N}\,\ket{s} + \ket{\beta},  \\
    \ket{\hat v} &\equiv \frac{\ket{v}}{\sqrt{N+1}}
                 = \sin\theta\,\ket{s} + \cos\theta\,\ket{\beta},
\end{align}
where
\begin{equation}\label{eq:theta}
    \cos\theta \equiv \frac{1}{\sqrt{N+1}},
    \qquad
    \sin\theta \equiv \sqrt{\frac{N}{N+1}}.
\end{equation}
The construction below uses the $\ket{\alpha_j}$ with whatever phases the diagonalisation assigns; by Lemma~\ref{lem:main} the resulting overlaps are the same for any such choice, and by Proposition~\ref{prop:gatecorrect} so is the reconstructed gate.

\subsection{The unitary \texorpdfstring{$W$}{W}}

We define $W\in U(\HH_{N+1})$ as the unique unitary that rotates
$\ket{\hat v}$ onto $\ket{\beta}$ inside the two-plane
$\operatorname{span}\{\ket{s},\ket{\beta}\}$ and acts as the identity on
its $(N-1)$-dimensional orthogonal complement. Explicitly, on the ordered
basis $(\ket{s},\ket{\beta})$ of this two-plane,
\begin{align}\label{eq:Wplane}
    W\bigl|_{\operatorname{span}\{\ket{s},\ket{\beta}\}}
    &=\;
    \begin{pmatrix}
        \cos\theta & -\sin\theta \\
        \sin\theta &  \cos\theta
    \end{pmatrix},
\end{align}
and $W\ket{\chi}=\ket{\chi}$ for every
$\ket{\chi}\perp\operatorname{span}\{\ket{s},\ket{\beta}\}$. Since
$\operatorname{span}\{\ket{s},\ket{\beta}\}^\perp$ is the $(N-1)$-dim
subspace of $\HH_N$ orthogonal to $\ket{s}$, this fully specifies $W$.

Two facts follow immediately. First, $W$ is unitary, being the direct sum
of a $2\times 2$ rotation and the identity on its orthogonal complement.
Second, by direct computation,
\begin{align*}\label{eq:Wcheck}
    W\ket{\hat v}
    &= \sin\theta\,W\ket{s} + \cos\theta\,W\ket{\beta} \\
    &= \sin\theta\cos\theta\,\ket{s} + \sin^2\theta\,\ket{\beta}
    - \sin\theta\cos\theta\,\ket{s} + \cos^2\theta\,\ket{\beta} \\
    &= \ket{\beta}.
\end{align*}

Decomposing each $\ket{\alpha_j}$ along $\ket{s}$ and its $\HH_N$-orthogonal
complement:
\begin{align}
    \ket{\alpha_j}
    &= \braket{s}{\alpha_j}\,\ket{s} + \ket{\alpha_j^\perp} \\
    &= \frac{1}{\sqrt{N}}\,\ket{s} + \ket{\alpha_j^\perp},\\
    \ket{\alpha_j^\perp}&\equiv \ket{\alpha_j} - \frac{1}{N}\sum_k\ket{\alpha_k}.
\end{align}
Applying $W$ and using $W\ket{\alpha_j^\perp}=\ket{\alpha_j^\perp}$,
\begin{align}
    W\ket{\alpha_j}
    &= \frac{1}{\sqrt N}\bigl(\cos\theta\,\ket{s}+\sin\theta\,\ket{\beta}\bigr)
       + \ket{\alpha_j^\perp}.
       \label{eq:Walpha}
\end{align}
It follows that
\begin{align}\label{eq:beta-row}
    \forall j,\qquad \bra{\beta}W\ket{\alpha_j} &= \frac{1}{\sqrt{N+1}} \\
    \bra{\beta}W\ket{\beta} &= \frac{1}{\sqrt{N+1}},
\end{align}
such that the matrix of $W$ in the ordered basis is
$(\ket{\alpha_1},\dots,\ket{\alpha_N},\ket{\beta})$ has the block form
\begin{align}\label{eq:Wmatrix}
    & W \;=\;
    \begin{pmatrix}
        I_N - c\,J_N
        & -\dfrac{1}{\sqrt{N+1}}\,\mathbf{1}_N \\[6pt]
        \dfrac{1}{\sqrt{N+1}}\,\mathbf{1}_N^{\!\top}
        & \dfrac{1}{\sqrt{N+1}}
    \end{pmatrix},\\
    &\text{with } c \;\equiv\; \frac{1}{N}\!\left(1 - \frac{1}{\sqrt{N+1}}\right), \nonumber
\end{align}
where $I_N$ is the $N\times N$ identity, $J_N$ is the $N\times N$
all-ones matrix, and $\mathbf{1}_N$ is the $N\times 1$ column of ones.

\subsection{The lemma}

\begin{lemma}\label{lem:main}
For $j=1,\dots,N$, define
\begin{equation}\label{eq:aj-def}
    \ket{a_j} \;\equiv\; \frac{P_N\,W\ket{\alpha_j}}{\bigl\|P_N\,W\ket{\alpha_j}\bigr\|}
    \;\in\; \HH_N.
\end{equation}
Then
\begin{enumerate}
    \item each $\ket{a_j}$ is a unit vector in $\HH_N$;
    \item for every $j\neq k$,
          $\;\braket{a_j}{a_k} = -\dfrac{1}{N}$. In particular the relative phase between any pair is $\pi$;
    \item $\{\ket{a_1},\dots,\ket{a_N}\}$ is linearly independent and
          spans $\HH_N$.
\end{enumerate}

\end{lemma}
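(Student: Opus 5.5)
The plan is to work entirely in the coordinates fixed by the orthonormal basis $(\ket{\alpha_1},\dots,\ket{\alpha_N},\ket{\beta})$ and read $P_N W\ket{\alpha_j}$ directly off the block form \eqref{eq:Wmatrix}. Applying $P_N$ deletes the $\ket{\beta}$-row, so the unnormalized vector is the $j$th column of the upper-left block $I_N - cJ_N$:
\begin{equation*}
    \ket{u_j} \equiv P_N W\ket{\alpha_j} = \ket{\alpha_j} - c\sum_{k=1}^{N}\ket{\alpha_k}.
\end{equation*}
As a cross-check, the same vector follows from \eqref{eq:Walpha}: the term $\frac{1}{\sqrt N}\cos\theta\,\ket{s}$ contributes $\frac{1}{N\sqrt{N+1}}\sum_k\ket{\alpha_k}$, and $\ket{\alpha_j^\perp}$ contributes $\ket{\alpha_j}-\frac1N\sum_k\ket{\alpha_k}$. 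Together these give exactly the coefficient $c$.

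Next I compute the Gram entries of the $\ket{u_j}$ using orthonormality of the $\ket{\alpha_k}$. This gives
\begin{equation*}
    \braket{u_j}{u_j} = 1-2c+Nc^2, \qquad \braket{u_j}{u_k} = -2c+Nc^2 \quad (j\neq k).
\end{equation*}
The key simplification is to write $t\equiv 1/\sqrt{N+1}$, so that $Nc = 1-t$ and hence $-2c+Nc^2 = -c(1+t) = -(1-t^2)/N = -1/(N+1)$. It follows that $\|u_j\|^2 = 1-1/(N+1) = N/(N+1)$, which is nonzero and independent of $j$. Normalizing gives item 1 immediately. Item 2 follows from the ratio $\braket{a_j}{a_k} = \frac{-1/(N+1)}{N/(N+1)} = -1/N$, which is real and negative, so the relative phase is $\pi$.

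For item 3, I use the Gram matrix of the $\ket{a_j}$:
\begin{equation*}
    \Gamma = \left(1+\tfrac1N\right)I_N - \tfrac1N J_N.
\end{equation*}
Since $J_N$ has eigenvalues $N$ (once, on $\mathbf{1}_N$) and $0$ (with multiplicity $N-1$), $\Gamma$ has eigenvalues $1/N$ and $1+1/N$, all strictly positive. Hence $\Gamma$ is nonsingular, so the $N$ vectors are linearly independent, and they span the $N$-dimensional space $\HH_N$. Alternatively, one can note that $I_N-cJ_N$ is invertible because its eigenvalue on $\mathbf{1}_N$ is $1-Nc=t\neq0$.

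The main obstacle is merely the algebra of $c$: keeping track of the identity $Nc=1-t$ so that the $(1-t)(1+t)=N/(N+1)$ cancellation appears cleanly. I will also remark that nothing depends on the phases chosen for the $\ket{\alpha_j}$, because $W$ is defined relative to whatever basis is given, and the overlaps computed above are basis-intrinsic numbers.
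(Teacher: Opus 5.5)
Your proof is correct and has the same skeleton as the paper's: obtain the Gram entries $\delta_{jk}-\frac{1}{N+1}$ of the unnormalized vectors $P_N W\ket{\alpha_j}$, normalize, and read off the spectrum of $\frac{N+1}{N}I_N-\frac{1}{N}J_N$. The only difference is how you get those Gram entries. The paper obtains them in one line from unitarity of $W$ and $P_N=\mathbb{1}-\ket{\beta}\bra{\beta}$, so only the constant $\beta$-row $\bra{\beta}W\ket{\alpha_j}=1/\sqrt{N+1}$ matters; you instead expand the explicit columns of the block $I_N-cJ_N$ and use $Nc=1-t$, which costs a little more algebra but directly yields the paper's closed form for $\ket{a_j}$ and your alternative invertibility argument for item 3.
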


\begin{proof}
The argument is three short calculations.

\smallskip
\noindent\textbf{(1) Norm of $P_N W\ket{\alpha_j}$.}
Since $P_N=\mathbb{1}-\ket{\beta}\bra{\beta}$ and $W^\dagger W=\mathbb{1}$,
\begin{align}
    \bigl\|P_N W\ket{\alpha_j}\bigr\|^2
    &= \|W\ket{\alpha_j}\|^2 - \lvert\bra{\beta}W\ket{\alpha_j}\rvert^2 \\
    &= \frac{N}{N+1},
    \label{eq:norm}
\end{align}
using \eqref{eq:beta-row}. Hence $\ket{a_j}$ is a unit vector, and by
construction it lies in the range of $P_N$, namely $\HH_N$. This proves
(i).

\smallskip
\noindent\textbf{(2) Pairwise overlaps.}
We have:
\begin{align}
    \forall j\neq k, \;\bra{W\alpha_j}P_N\ket{W\alpha_k}
    &= \bra{W\alpha_j}\left(I_N - \ket{\beta}\braket{\beta}\right)\ket{W\alpha_k} \nonumber\\
   &= -\frac{1}{N+1}.
    \label{eq:offdiag}
\end{align}
Combining \eqref{eq:norm} and \eqref{eq:offdiag},
\begin{equation}
    \braket{a_j}{a_k}
    = \frac{\bra{W\alpha_j}P_N\ket{W\alpha_k}}
           {\|P_N W\ket{\alpha_j}\|\,\|P_N W\ket{\alpha_k}\|}
     = -\frac{1}{N},
\end{equation}
which proves (ii). Because every overlap is real and negative, the
relative phase between any pair is exactly $\pi$.

\smallskip
\noindent\textbf{(3) Linear independence.}
The Gram matrix of $\{\ket{a_j}\}$ is
\begin{equation}
    G_{jk} = \delta_{jk} - (1-\delta_{jk})/N
           = \frac{N+1}{N}I_N - \frac{1}{N}\,J_N.
\end{equation}
The matrix $J_N$ has eigenvalue $N$ on the all-ones vector $\mathbf{1}_N$
and $0$ on $\mathbf{1}_N^\perp$, so $G$ has eigenvalues
\begin{equation}
    \lambda_{\mathbf 1} = \frac{1}{N},\qquad
    \lambda_{\perp} = \frac{N+1}{N}\ (\text{multiplicity }N-1).
\end{equation}
Both are strictly positive, so $G$ is positive-definite and the
$\ket{a_j}$ are linearly independent. $N$ linearly independent vectors in
an $N$-dimensional space span it, proving (iii).
\end{proof}

The explicit formula for the $\ket{a_j}$ is
\begin{equation}\label{eq:aj-closed}
    \;\ket{a_j}
    \;=\;
    \sqrt{\frac{N+1}{N}}\,\ket{\alpha_j}
    \;-\;\frac{\sqrt{N+1}-1}{N\sqrt{N}}\sum_{k=1}^{N}\ket{\alpha_k}.
\end{equation}

\subsection{Generalizations: regular and irregular simplices}

The construction also yields a privileged $(N{+}1)$-th vector, obtained
by applying the same procedure to the ancilla $\ket{\beta}$ instead of
to an $\ket{\alpha_j}$.

Defining
\begin{equation}
    \ket{a_0} \;\equiv\; \frac{P_N\,W\ket{\beta}}{\|P_N\,W\ket{\beta}\|}.
\end{equation}
Then $\ket{a_0} = -\ket{s}\in\HH_N$ is a unit vector, and
\begin{equation}
    \braket{a_0}{a_j} = -\frac{1}{N}\qquad\text{for every } j=1,\dots,N.
\end{equation}
Consequently $\{\ket{a_0},\ket{a_1},\dots,\ket{a_N}\}$ is the vertex set
of a regular $N$-simplex inscribed in the unit sphere of $\HH_N$, and
satisfies the centroid identity
\begin{equation}\label{eq:centroid}
    \sum_{j=0}^{N}\ket{a_j} \;=\; 0.
\end{equation}

This interpretation generalises the construction: any $\ket{v}$ with non-zero overlap with all of the $\ket{\alpha_j}$, and whose rotated image projects non-trivially onto $\HH_N$, yields a valid frame. Replacing the regular simplex by an irregular one allows unequal pairwise overlaps (and the centroid identity~\eqref{eq:centroid} must be weighted accordingly).

\subsection{Application to Sec.~\ref{subsec:dens}}

Lemma~\ref{lem:main} provides the existence guarantee underlying the
density-matrix-transformation protocol of Sec.~\ref{subsec:dens}. Given $2^m$
orthonormal states $\{\ket{\alpha_j}\}$ on the $m$-qubit space, the
construction produces $2^m$ unit vectors $\{\ket{a_j}\}$ in the same
$m$-qubit space whose pairwise overlaps are non-zero, identically equal
to $-1/N$, and whose relative phases are explicitly known (every pair has
relative phase $\pi$). They are then used to obtain the $\{\ket{\bar{\alpha}_j}\}$. The construction consumes only one ancilla
dimension $\ket{\beta}$, which is rotated away by the explicit unitary
$W$ and removed by the
final projection.

The overlaps are fixed a priori to $-1/N$ and are unchanged by the arbitrary
phases of the $\ket{\alpha_j}$, since Lemma~\ref{lem:main} holds for any
orthonormal input. The frame is thus a concrete witness that the
density-matrix method is always applicable: a non-orthogonal, connected,
spanning set with known overlaps exists for every eigenbasis. The exactness
and uniqueness of the gate the interface ultimately reconstructs, also
independent of the eigenvector phases, is established in
Appendix~\ref{app:gatecorrect}.

\section{Gate correctness of the density-matrix construction}\label{app:gatecorrect}

For gates added via density-matrix transformation, the unitary applied to the device is exactly the one intended by the user, independently of the statistical accuracy of the tomography.

\begin{proposition}\label{prop:gatecorrect}
Let the user intend an $m$-qubit unitary $U$, applied as $U\otimes I$ on the $m$ target qubits together with one passive neighbouring qubit included only to enlarge the tomography. Let $\tilde R$ be the reported $(m{+}1)$-qubit reduced density matrix and $\tilde T=(U\otimes I)\,\tilde R\,(U\otimes I)^\dagger$ the target the user computes classically from $\tilde R$. If the overlap graph of the construction is connected (equivalently, the only $m$-qubit unitary $M$ with $[M\otimes I,\tilde R]=0$ is $M\propto\mathbb 1$), then any gate $U_c\otimes I$ the interface reconstructs from $(\tilde R,\tilde T)$ equals $U$ up to a global phase, independently of the tomographic accuracy.
\end{proposition}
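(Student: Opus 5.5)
The plan is to characterise any reconstructed gate by what it must do to $\tilde R$, and then use the connectivity hypothesis to pin it down. First I would make precise what ``the interface reconstructs $U_c\otimes I$ from $(\tilde R,\tilde T)$'' means. The interface diagonalises $\tilde R=\sum_j\lambda_j\ket{\psi_j}\bra{\psi_j}$ and $\tilde T$. It pairs eigenvectors by their distinct eigenvalues and fixes the relative phases via Eq.~\eqref{eq:phase}. Whatever phases the diagonalisation assigns, the output is a unitary of the form $U_c\otimes I$ that maps each eigenvector of $\tilde R$ to the eigenvector of $\tilde T$ with the same eigenvalue, up to a phase. Since the eigenvalues are distinct, this is equivalent to the operator identity
\begin{equation*}
(U_c\otimes I)\,\tilde R\,(U_c\otimes I)^\dagger=\tilde T .
\end{equation*}
I would therefore take this conjugation identity as the defining property of any admissible reconstruction. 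The point is that this property involves only the reported matrix $\tilde R$ and never the true state, which is why $\delta$ will play no role.

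Second, substitute $\tilde T=(U\otimes I)\tilde R(U\otimes I)^\dagger$ and set $M=U^\dagger U_c$. The identity becomes $(M\otimes I)\tilde R(M\otimes I)^\dagger=\tilde R$, that is, $[M\otimes I,\tilde R]=0$. By the hypothesis in its commutant form, $M=e^{i\phi}\mathbb 1$, hence $U_c=e^{i\phi}U$. Existence is immediate, since $U_c=U$ itself satisfies the identity. This is the whole logical core, and it uses nothing about the accuracy of $\tilde R$, only that it is a Hermitian matrix with the stated genericity.

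Third, and this is the step I expect to be the real obstacle, I need to justify the parenthetical ``equivalently'', so that the graph-theoretic hypothesis used in Sec.~\ref{subsec:dens} feeds into the commutant argument. Suppose $[M\otimes I,\tilde R]=0$ with distinct eigenvalues. Then $M\otimes I$ preserves each eigenvector $\ket{\psi_j}$ up to a phase $e^{i\theta_j}$. Write $\ket{\psi_j}$ in Schmidt form $\sum_r\sqrt{p_{j,r}}\ket{\alpha_{j,r}}\ket{\beta_{j,r}}$ with distinct Schmidt coefficients. Then $M\ket{\alpha_{j,r}}=e^{i\theta_j}\ket{\alpha_{j,r}}$ for every $m$-qubit vector in the frame. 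For two frame vectors with nonzero overlap, $\bra{\alpha}M^\dagger M\ket{\alpha'}=\braket{\alpha}{\alpha'}$ forces their phases to agree. Connectivity of the overlap graph then propagates a single phase through all frame vectors, and because the frame spans the $m$-qubit space, $M=e^{i\phi}\mathbb 1$. For the converse direction, a disconnected graph splits the frame into mutually orthogonal spanning blocks, and assigning different phases to the blocks gives a nontrivial $M$. The delicate points are the product-eigenstate case, where only one Schmidt vector appears, and the need for distinct Schmidt coefficients so that $M$ acts on each $\ket{\alpha_{j,r}}$ individually rather than mixing them. I would handle both by the genericity assumption already adopted in Sec.~\ref{subsec:dens}. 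Finally I would remark that Eq.~\eqref{eq:phase} is exactly the constructive form of this phase propagation, so the interface's output is one admissible $U_c$ and is therefore $U$ up to a global phase.
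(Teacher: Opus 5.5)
Your Steps 1 and 2 are the paper's proof essentially verbatim. The paper also takes $(U_c\otimes I)\tilde R(U_c\otimes I)^\dagger=\tilde T$ as the defining property of the reconstruction and deduces that $(U_c^\dagger U)\otimes I$ commutes with $\tilde R$. It then invokes the commutant form of the hypothesis, and notes that only the relation between $\tilde T$ and $\tilde R$ is used, so $\delta$ drops out. The paper does not prove the parenthetical ``equivalently''. It only remarks that failure of the hypothesis means $\tilde R$ is block-diagonal across some $\bigoplus_a(\mathcal V_a\otimes\mathbb C^2)$.

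Your Step 3 therefore goes beyond the paper, and only half of it is right. The forward direction (connected overlap graph $\Rightarrow$ trivial commutant) is all you need to feed the graph hypothesis into Step 2. It is correct, and easier than you fear. Contracting $(M\otimes I)\ket{\psi_j}=e^{i\theta_j}\ket{\psi_j}$ with $\bra{\beta_{j,r}}$ gives $M\ket{\alpha_{j,r}}=e^{i\theta_j}\ket{\alpha_{j,r}}$ for every $r$ with $p_{j,r}>0$. So neither product eigenvectors nor degenerate Schmidt coefficients need any genericity appeal.

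The converse, however, fails. The two Schmidt vectors of one entangled eigenvector are orthogonal, so they can lie in different components of the overlap graph. Yet any commuting $M$ must give both the same phase $e^{i\theta_j}$, so your block-phase $M$ then does not commute with $\tilde R$. For a counterexample, take $m=1$ and eigenvectors $\sqrt p\ket{00}+\sqrt{1-p}\ket{11}$, $\sqrt{1-p}\ket{00}-\sqrt p\ket{11}$, $\ket{01}$, $\ket{10}$, with $0<p<1$, $p\neq\tfrac12$ and distinct eigenvalues. The frame consists of copies of $\ket0$ and $\ket1$, and the overlap graph has two components. Yet the first eigenvector has full Schmidt rank, which forces every commuting $M$ to be a scalar.

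The equivalence does hold for the graph whose vertices are the eigenvectors of $\tilde R$ (equivalently, the overlap graph with an extra edge joining Schmidt partners). Its components span mutually orthogonal $\mathcal V_a$, each containing the whole $m$-qubit support of its eigenvectors. That is exactly the block structure $\bigoplus_a(\mathcal V_a\otimes\mathbb C^2)$ the paper names. None of this affects the proposition's conclusion.
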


\begin{proof}
The reconstructed gate reproduces the target,
\begin{equation*}
  (U_c\otimes I)\,\tilde R\,(U_c\otimes I)^\dagger=(U\otimes I)\,\tilde R\,(U\otimes I)^\dagger,
\end{equation*}
so $(U_c^\dagger U)\otimes I$ commutes with $\tilde R$. By hypothesis the only product unitary commuting with $\tilde R$ is a scalar, hence $U_c=U$ up to a global phase. The hypothesis is generic: its failure requires $\tilde R$ to be block-diagonal across a decomposition $\bigoplus_a(\mathcal V_a\otimes\mathbb C^2)$ of the $m$-qubit factor, a measure-zero condition that statistical tomographic noise almost surely lifts (and which, if ever encountered, is cleared by repeating the tomography or choosing another neighbour). Finally, the argument invokes only the relation $\tilde T=(U\otimes I)\,\tilde R\,(U\otimes I)^\dagger$, not the value of $\tilde R$: a noisy report $\tilde R=\rho+E$ with $\lVert E\rVert=\mathcal{O}(\delta)$ is matched by a target $\tilde T$ built on the same $\tilde R$, so the perturbation is common to both descriptions and cancels. The reconstruction is therefore exact regardless of the statistical accuracy $\delta$.
\end{proof}

Tomographic noise thus affects only which state the user believes the device to hold (intention correctness), not the unitary that is applied (gate correctness).

The hypothesis of distinct eigenvalues is generic and, in the presence of statistical noise, almost surely satisfied; a residual degeneracy is lifted by the noise itself, or can be removed by repeating the tomography. The sole substantive requirement is that the user and the interface diagonalise the same reported $\tilde R$ under the same ordering convention, which holds by construction, the interface having supplied $\tilde R$ to the user.

\section{Adapt-H QITE's universality} \label{app:AHQITEU}
We give a rigorous argument for the universality of adapt-H QITE, introduced in Sec.~\ref{sec:combined}: with two-local domain ($k=2$) on a connected coupling graph it can prepare any pure state. The mechanism is the one anticipated in the main text (Lie-bracket closure of low-dimensional generators) but realised at the level of the flows the algorithm actually generates.

\subsection{Definitions and notation}

We encourage the reader to skip entirely these standard fundamental definitions, rephrased in the notations used in the text, and to return to them only if they are needed in the rest of this section. They are included here for convenience and to fully understand the orbit theorem, a quantum control result crucial for the demonstration.

\paragraph{Hilbert space and its K\"ahler structure.}
Let $\HH=\mathbb{C}^{\,n+1}$ be the system Hilbert space, of complex dimension $n+1$, equipped with the Hermitian inner product $\braket{\cdot}{\cdot}$ (linear in the ket, antilinear in the bra). Regarded as a \emph{real} vector space, $\HH$ has dimension $2(n+1)$ and carries three mutually compatible structures, obtained by splitting the inner product into its real and imaginary parts: writing $g(u,v)=\mathrm{Re}\braket{u}{v}$ and $\omega(u,v)=\mathrm{Im}\braket{u}{v}$,
\begin{equation*}
  \braket{u}{v}=g(u,v)+i\,\omega(u,v).
\end{equation*}
Here $g$ is a real, symmetric, positive-definite bilinear form (the \emph{Riemannian metric}) and $\omega$ is a real antisymmetric form (the \emph{symplectic form}). The third structure is the \emph{complex structure} $J$, the real-linear operator given by multiplication by $i$, $J u = i\,u$, $J^2=-\mathbb{1}$. It ties the other two together through $\omega(u,v)=g(Ju,v)$ and $g(Ju,Jv)=g(u,v)$; any two of $(g,\omega,J)$ determine the third, and together they constitute the \emph{K\"ahler structure} of $\HH$.

\paragraph{Projective Hilbert space.}
A \emph{pure state} is a ray: an equivalence class of nonzero vectors under multiplication by a nonzero complex scalar, $v\sim\lambda v$ ($\lambda\in\mathbb{C}^\ast$). The set of rays is the \emph{projective Hilbert space} $\mathbb{P}(\HH)$; once the canonical basis is fixed it is the complex projective space $\mathbb{CP}^{\,n}$, of complex dimension $n$ (two real dimensions are removed from $\HH$, one by normalisation and one by the global phase). It is compact, connected, and simply connected. The K\"ahler structure of $\HH$ descends to $\mathbb{P}(\HH)$, which is therefore a K\"ahler manifold; in particular it carries the \emph{Fubini-Study metric} $g$. For $N$ qubits, $\HH=(\mathbb{C}^2)^{\otimes N}$, so $n+1=2^N$ and $\mathbb{P}(\HH)=\mathbb{CP}^{\,2^N-1}$.

\paragraph{Tangent and horizontal space.}
The \emph{tangent space} $T_pM$ at a point $p$ of a manifold $M$ is the vector space of velocities of curves through $p$. For $M=\mathbb{P}(\HH)$, at a unit vector $\ket\psi$ we use the \emph{horizontal} representative
\begin{equation*}
  T_\psi=\{\ket{u}\in\HH:\braket{\psi}{u}=0\},
\end{equation*}
the complex-orthogonal complement of $\ket\psi$, with orthogonal projector $P_\psi^\perp=\mathbb{1}-\ket\psi\bra\psi$ onto it. It is $J$-invariant (a complex subspace of complex dimension $n$, real dimension $2n$); the removed complex line $\mathbb{C}\ket\psi=\mathrm{span}_{\mathbb R}\{\ket\psi,J\ket\psi\}$ is exactly the normalisation and phase (gauge) directions.

\paragraph{Vector fields and flows.}
A \emph{vector field} $X$ on $M$ assigns to each $p$ a tangent vector $X(p)\in T_pM$. Its \emph{integral curves} solve $\dot p(t)=X(p(t))$, and the \emph{time-$t$ flow} $\Phi^t_X$ sends $p$ to the point reached after following $X$ for time $t$ ($\Phi^0_X=\mathrm{id}$, $\Phi^{s+t}_X=\Phi^s_X\circ\Phi^t_X$). The \emph{forward} flow uses $t>0$, the \emph{backward} flow $t<0$, with $\Phi^{-t}_X=(\Phi^t_X)^{-1}=\Phi^t_{-X}$. A field is \emph{complete} if its flow exists for all $t$, automatic on a compact manifold such as $\mathbb{P}(\HH)$. A \emph{family of vector fields} $\mathcal D$ is a set of vector fields on $M$.

\paragraph{Derivative and gradient.}
For a map $F$ defined near $p$, its \emph{directional derivative} at $p$ along $u\in T_pM$ is $DF_p[u]=\tfrac{d}{dt}\big|_{0}F(\gamma(t))$, on any curve $\gamma$ with $\gamma(0)=p$, $\dot\gamma(0)=u$; it is linear in $u$. For a real-valued $E$ this is the \emph{differential} $dE_p(u)\equiv DE_p[u]$, a real-linear functional on $T_pM$; for a vector field $Y$, $DY_p[u]$ is the derivative of $p\mapsto Y(p)$, the object entering the bracket $[X,Y]=DY[X]-DX[Y]$. The \emph{gradient} $\nabla E$, with respect to the Riemannian metric $g$, is the unique tangent vector with
\begin{equation*}
  g(\nabla E,u)=dE_p(u)\qquad\text{for all }u\in T_pM,
\end{equation*}
$-\nabla E$ is the steepest-descent direction.

\paragraph{Lie bracket and Lie algebra.}
A \emph{Lie algebra} is a vector space with a bilinear, antisymmetric bracket $[\cdot,\cdot]$ obeying the Jacobi identity $[X,[Y,Z]]+[Y,[Z,X]]+[Z,[X,Y]]=0$. Two instances occur below: for operators, $[A,B]=AB-BA$; for vector fields, $[X,Y]=DY[X]-DX[Y]$. The \emph{Lie algebra generated by} $\mathcal D$, written $\mathrm{Lie}(\mathcal D)$, is the smallest space of vector fields containing $\mathcal D$ and closed under linear combinations and the bracket. Its elements are vector \emph{fields}; the \emph{evaluated Lie algebra} at $p$ is obtained by plugging in that point,
\begin{equation*}
  \mathrm{Lie}_p(\mathcal D)=\{X(p):X\in\mathrm{Lie}(\mathcal D)\}\subseteq T_pM ,
\end{equation*}
a subspace of the finite-dimensional tangent space $T_pM$.

\paragraph{Control system and orbit.}
A \emph{driftless control-affine system} is $\dot p=\sum_a c_a(t)\,f_a(p)$, where the $f_a$ are fixed \emph{control vector fields} and the $c_a(t)\in\mathbb{R}$ are chosen freely (``driftless'': no term independent of the controls). It is \emph{symmetric} if its set of control fields is closed under sign change ($-f_a$ is also a control field), so any motion can be reversed. The \emph{orbit} of $p$ is the set reachable by composing the forward and backward flows of the control fields,
\begin{multline}
  \mathcal \mathcal{O}(p)=\\
  \bigl\{(\Phi^{\,t_k}_{X_k}\!\circ\cdots\circ\Phi^{\,t_1}_{X_1})(p):
    k\in\mathbb{N},\ X_i\in\mathcal D,\ t_i\in\mathbb{R}\bigr\};
\end{multline}
for a symmetric system the reachable set coincides with the orbit. The orbits partition $M$; an orbit is \emph{open} (resp.\ \emph{closed}) when it is an open (resp.\ closed) subset of $M$. If every orbit is open then each is also closed (its complement is a union of the other open orbits), so on a \emph{connected} $M$ there is a single orbit, equal to all of $M$.

\paragraph{Distributions, rank, and analyticity.}
A \emph{distribution} assigns to each $p$ a subspace of $T_pM$ (here $p\mapsto\mathrm{Lie}_p(\mathcal D)$); its \emph{rank} at $p$ is that subspace's dimension. The distribution is \emph{full} at $p$ if $\mathrm{Lie}_p(\mathcal D)=T_pM$ (rank $\dim M$), and \emph{full} (equivalently \emph{bracket-generating}) if this holds everywhere. The \emph{constant-rank} hypothesis demands a $p$-independent rank which is a regularity assumption many smooth-category theorems require, but \emph{not} needed in the real-analytic category. A function is \emph{real-analytic} ($C^\omega$) if near each point it equals its convergent Taylor series (stronger than smooth, $C^\infty$). An \emph{immersion} is a smooth map $\iota:N\to M$ with injective differential everywhere (not necessarily globally injective); an \emph{immersed submanifold} is the image of an injective immersion which is intrinsically a manifold, but need not carry the subspace topology of $M$ (it may wind around densely), unlike an \emph{embedded} submanifold.
\subsection{A QITE step is a gradient flow}
One infinitesimal QITE step on a Hermitian term $h$ moves $\ket\psi$ along
\begin{align}\label{eq:Vh}
  V_h(\psi)\;&\equiv\;-\bigl(h-\langle h\rangle\bigr)\ket\psi
            \;=\;-\,P_\psi^\perp h\ket\psi, \\
   \langle h\rangle &=\bra\psi h\ket\psi . \nonumber
\end{align}
This is (minus) the Fubini-Study gradient of the energy $E(\psi)=\langle h\rangle$ on $\mathbb{P}(\HH)$, i.e.\ its steepest-descent direction~\cite{brockett1991,stokes2020natgrad}. Equivalently, $V_h$ is the velocity of the normalised imaginary-time flow $e^{-\tau h}\ket\psi/\lVert e^{-\tau h}\ket\psi\rVert$, so one QITE step advances the state by an infinitesimal imaginary-time step (up to a constant rescaling of the flow rate).

To verify both claims, recall the Fubini-Study metric, the real inner product induced from $\HH$ on the horizontal tangent space, $g_\psi(u,v)=\mathrm{Re}\braket{u}{v}$ for $u,v\in T_\psi$. The energy $E(\psi)=\langle h\rangle$ is phase-invariant, hence descends to $\mathbb{P}(\HH)$; its differential along a horizontal $\ket u$ is
\begin{align*}
  dE_\psi(u)
  &=\braket{u}{h\psi}+\braket{h\psi}{u}
   =2\,\mathrm{Re}\braket{h\psi}{u}\\
  &=2\,\mathrm{Re}\bigl\langle (h-\langle h\rangle)\psi\,\big|\,u\bigr\rangle,
\end{align*}
the last step using $\mathrm{Re}\braket{\langle h\rangle\psi}{u}=\langle h\rangle\,\mathrm{Re}\braket{\psi}{u}=0$ for horizontal $\ket u$. Comparing with $g_\psi(\nabla E,u)=dE_\psi(u)$ for all $\ket u\in T_\psi$ identifies the gradient, and hence $V_h$:
\begin{equation*}
  \nabla E(\psi)=2\,(h-\langle h\rangle)\ket\psi,
  \qquad
  V_h(\psi)=-\tfrac12\,\nabla E(\psi),
\end{equation*}
so $V_h$ points along $-\nabla E$ (the factor $\tfrac12$ rescales the rate, not the direction). Independently, differentiating the normalised flow $\ket{\psi(\tau)}=e^{-\tau h}\ket{\psi_0}/\lVert e^{-\tau h}\ket{\psi_0}\rVert$ at $\tau=0$, using $\tfrac{d}{d\tau}e^{-\tau h}\ket{\psi_0}\big|_0=-h\ket{\psi_0}$ and $\tfrac{d}{d\tau}\lVert e^{-\tau h}\ket{\psi_0}\rVert\big|_0=-\langle h\rangle$, gives
\begin{equation*}
  \tfrac{d}{d\tau}\ket{\psi(\tau)}\big|_0
  =-(h-\langle h\rangle)\ket{\psi_0}
  =V_h(\psi_0).
\end{equation*}
The two flows thus coincide up to a constant rescaling of time, both monotonically decreasing $\langle h\rangle$.

The QITE construction returns a Hermitian $A(h)$ fixed by its first-order action on the current state, $-iA(h)\ket\psi=-(h-\langle h\rangle)\ket\psi=V_h(\psi)$ (the defining property of the imaginary-time step; see the QITE gates of Sec.~\ref{sec:model}). The unitary $e^{-i\Delta\tau A(h)}$ therefore reproduces this direction,
\begin{equation*}
  e^{-i\Delta\tau A(h)}\ket\psi=\ket\psi+\Delta\tau\,V_h(\psi)+\mathcal \mathcal{O}(\Delta\tau^2),
\end{equation*}
so ``apply a QITE step with term $h$'' is the same as ``flow for time $\Delta\tau$ along $V_h$''. The admissible terms are $h\in\mathcal L_2$, the real vector space of one- and two-local Hermitian operators on the edges of $G$. Because $\mathcal L_2$ is closed under negation, $-h\in\mathcal L_2$ and $V_{-h}=-V_h$: \emph{both} signs of every control direction are available.

\subsection{A single step is not enough}
As $h$ ranges over $\mathcal L_2$, the directions $V_h(\psi)$ fill
\begin{align*}
  \mathcal R_\psi=\{V_h(\psi):h\in\mathcal L_2\}, \\
  \dim_{\mathbb R}\mathcal R_\psi\le\dim\mathcal L_2=\mathcal \mathcal{O}(N),
\end{align*}
a vanishing fraction of $\dim T_\psi=2(2^N-1)$. In particular there is \emph{no} choice of $h$ that realises a generic target direction in one step; universality must come from \emph{composing} non-commuting steps.

\subsection{Universality via geometric control}
Fix a basis $h_1,\dots,h_m$ of $\mathcal L_2$, with $m=\dim\mathcal L_2=\mathcal \mathcal{O}(N)$. Since $V_h$ is linear in $h$ (from \eqref{eq:Vh}), $V_{\sum_a c_a h_a}=\sum_a c_a V_{h_a}$, such that adapt-H QITE is the driftless control-affine system on $\mathbb{P}(\HH)$
\begin{equation}\label{eq:control}
  \dot\psi=\sum_{a=1}^{m}c_a(t)\,V_{h_a}(\psi),\qquad c_a(t)\in\mathbb R,
\end{equation}
with control fields $\{V_{h_a}\}$. It is \emph{symmetric}, since $\mathcal L_2$ is closed under negation and $V_{-h}=-V_h$, so each $-V_{h_a}$ is again a control field; hence the set reachable from $\ket{\psi_0}$ equals its orbit.

The orbit is characterised by the orbit theorem, in the notation above ($\mathcal D$ a family of vector fields, $\mathrm{Lie}(\mathcal D)$ its generated Lie algebra, $\mathrm{Lie}_p(\mathcal D)\subseteq T_pM$ its evaluation, $\mathcal \mathcal{O}(p)$ the orbit).

\medskip
\noindent\textbf{Orbit theorem (Sussmann; Hermann-Nagano)~\cite{jurdjevic1972,dalessandro2021,schirmer2001}.}
\emph{For any family $\mathcal D$ of complete vector fields, each orbit $\mathcal \mathcal{O}(p)$ is an immersed connected submanifold of $M$ whose tangent space at every $q\in\mathcal \mathcal{O}(p)$ contains the evaluated Lie algebra, $T_q\mathcal \mathcal{O}(p)\supseteq\mathrm{Lie}_q(\mathcal D)$. If $M$ and the fields of $\mathcal D$ are real-analytic, this is an equality, $T_q\mathcal \mathcal{O}(p)=\mathrm{Lie}_q(\mathcal D)$.}

\medskip
\noindent We apply this with $M=\mathbb{P}(\HH)$ and $\mathcal D=\{V_{h_a}\}$; the generated Lie algebra is $\mathfrak F$, with evaluated Lie algebra $\mathfrak F(\psi)\subseteq T_\psi$. The fields are complete ($\mathbb{P}(\HH)$ is compact) and polynomial in the components of $\ket\psi$, hence real-analytic, so the analytic case applies, $T_\psi(\mathrm{orbit})=\mathfrak F(\psi)$, with no separate constant-rank hypothesis. Consequently, if the distribution is full, $\mathfrak F(\psi)=T_\psi$ for every $\ket\psi$, then by the open/closed-orbit argument above ($\mathbb{P}(\HH)$ being connected) there is a single orbit, equal to all of $\mathbb{P}(\HH)$: every pure state is reachable from any $\ket{\psi_0}$. It thus suffices to establish $\mathfrak F(\psi)=T_\psi$ at every $\ket\psi$.

Applying the complex structure (multiplication by $i$) to a gradient field $V_A(\psi)$ (following the definition Eq.~\eqref{eq:Vh} for $A$ Hermitian) gives the corresponding Hamiltonian (Schr\"odinger) field $iV_A(\psi)$.

\begin{lemma}[Bracket structure]\label{lem:brackets}
For Hermitian $A,B$, writing $C=-i[A,B]$ (Hermitian), the bracket of the gradient fields, projected onto $T_\psi$, is
\begin{equation*}
  P_\psi^\perp [V_A,V_B]=i\,V_C .
\end{equation*}
\end{lemma}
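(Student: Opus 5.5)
The plan is to compute the bracket directly in the ambient space $\HH$, where the fields $V_A(\psi)=-(A-\langle A\rangle_\psi)\ket\psi$ are given by explicit real-polynomial formulas in $\ket\psi$. I would use the convention $[X,Y]=DY[X]-DX[Y]$ fixed in the definitions above, and only project onto $T_\psi$ at the end. The projection $P_\psi^\perp$ removes every multiple of $\ket\psi$, both the normalisation and the phase (gauge) direction. So any term proportional to $\ket\psi$, real or complex coefficient, can be discarded, and I will track such terms only loosely.

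\textbf{Step 1: directional derivative of one field.} Differentiate $V_B$ along an arbitrary $\ket u$. Since $\langle B\rangle_\psi=\bra\psi B\ket\psi$, its derivative is $d\langle B\rangle_\psi[u]=2\,\mathrm{Re}\bra\psi B\ket u$, a real scalar. Hence
\begin{equation*}
DV_B[u]=-(B-\langle B\rangle)\ket u+\bigl(2\,\mathrm{Re}\bra\psi B\ket u\bigr)\ket\psi .
\end{equation*}

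\textbf{Step 2: assemble the bracket.} Substitute $u=V_A(\psi)=-(A-\langle A\rangle)\ket\psi$ to get
\begin{equation*}
DV_B[V_A]=(B-\langle B\rangle)(A-\langle A\rangle)\ket\psi+c_1\ket\psi ,
\end{equation*}
and symmetrically for $DV_A[V_B]$ with some real scalar $c_2$. In the difference, the scalar shifts $\langle A\rangle,\langle B\rangle$ commute with everything, so the cross terms cancel and only the operator commutator survives:
\begin{equation*}
[V_A,V_B](\psi)=[B,A]\ket\psi+c\ket\psi .
\end{equation*}
With $C=-i[A,B]$ we have $[A,B]=iC$, so $[B,A]=-iC$. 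Therefore $[V_A,V_B](\psi)=-iC\ket\psi+c\ket\psi$. The operator $C$ is Hermitian because $(-i[A,B])^\dagger=i[B,A]=-i[A,B]$.

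\textbf{Step 3: project.} $P_\psi^\perp$ kills $c\ket\psi$. We also have $P_\psi^\perp C\ket\psi=(C-\langle C\rangle)\ket\psi$. Together these give
\begin{equation*}
P_\psi^\perp[V_A,V_B]=-i\,(C-\langle C\rangle)\ket\psi=i\,V_C(\psi).
\end{equation*}

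The main obstacle is not the algebra but justifying that the ambient-space bracket, projected by $P_\psi^\perp$, represents the genuine vector-field bracket on $\mathbb{P}(\HH)$ in the horizontal picture. The $V_A$ are defined on all of $\HH$ (or the unit sphere), and they are not exactly tangent to the sphere off it. So the bracket computed in $\HH$ may pick up components along $\ket\psi$ and $i\ket\psi$.

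I would handle this by noting three facts. First, $V_A$ is equivariant under $\ket\psi\mapsto e^{i\phi}\ket\psi$. Second, each $V_A$ is tangent to the unit sphere, since $\mathrm{Re}\braket{\psi}{V_A}=0$. Third, the quotient map to $\mathbb{P}(\HH)$ has differential $P_\psi^\perp$ at unit vectors. Brackets of projectable fields then project to brackets of their images, so discarding the $\ket\psi$-components is exactly the passage to $\mathbb{P}(\HH)$. This is consistent with how the lemma is stated and how it feeds the orbit theorem.
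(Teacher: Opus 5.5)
Your proof is correct and follows the paper's own argument. You differentiate $V_B$ along $\ket u$ (your $2\,\mathrm{Re}\bra\psi B\ket u$ is exactly the paper's $\braket{u}{B\psi}+\braket{\psi}{Bu}$), form $DV_B[V_A]-DV_A[V_B]$ so that only $[B,A]\ket\psi=-iC\ket\psi$ survives, and project with $P_\psi^\perp$. Your leftover $c\ket\psi$ in fact vanishes identically, as the paper asserts, since both $\ket\psi$-coefficients equal $2\langle A\rangle\langle B\rangle-\langle\{A,B\}\rangle$, which is symmetric in $A,B$. Your justification, via phase-equivariance and tangency to the sphere, that the projected ambient bracket is the bracket on $\mathbb{P}(\HH)$ is a correct addition that the paper leaves implicit.
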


\begin{proof}
Write $V_A(\psi)=-(A-\langle A\rangle)\ket\psi$ and likewise for $V_B$. The directional derivative of $V_B$ along $\ket u$ is
\begin{multline*}
  DV_B(\psi)[u]= \\ -(B-\langle B\rangle)\ket u
    +\bigl(\braket{u}{B\psi}+\braket{\psi}{Bu}\bigr)\ket\psi .
\end{multline*}
Forming $[V_A,V_B]=DV_B[V_A]-DV_A[V_B]$, every term proportional to $\langle A\rangle$, $\langle B\rangle$, $\langle A\rangle\langle B\rangle$, and to the anticommutator $\langle\{A,B\}\rangle$ appears identically in both pieces and cancels, leaving
\begin{equation*}
  [V_A,V_B](\psi)=-iC\ket\psi .
\end{equation*}
Projecting onto $T_\psi$ gives $-i(C-\langle C\rangle)\ket\psi=iV_C(\psi)$.
\end{proof}

Lemma~\ref{lem:brackets} shows that the assignment $A\mapsto V_A$ intertwines the operator commutator with the (projected) vector-field bracket. Because $T_\psi$ is $J$-invariant, $i$ commutes with $P_\psi^\perp$, and for these fields the bracket's departure from complex linearity lies along $\ket\psi$, hence drops out under $P_\psi^\perp$; the $i$ therefore factors through, giving also $P_\psi^\perp[iV_A,V_B]=-V_C$ and $P_\psi^\perp[iV_A,iV_B]=-iV_C$. Let $\mathfrak{g}$ (later shown to be $\mathfrak{su}(2^N)$) be the Lie algebra generated by $\mathcal L_2$ under commutation; then:

\begin{corollary}\label{cor:closure}
$\mathfrak F$ contains $V_X$ for every $X\in\mathfrak{g}$.
\end{corollary}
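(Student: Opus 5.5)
The plan is an induction on the depth of iterated commutators that generate $\mathfrak g$ from $\mathcal L_2$. The main subtlety is that $\mathfrak F$ is a Lie algebra of genuine vector fields, whereas Lemma~\ref{lem:brackets} only gives the bracket \emph{after} projection by $P_\psi^\perp$. So the first step is to remove the projection. I would recompute the bracket directly from the formula in the proof of Lemma~\ref{lem:brackets}, $[V_A,V_B](\psi)=-iC\ket\psi$, and write it as
\begin{equation*}
  -iC\ket\psi = -i(C-\langle C\rangle)\ket\psi - i\langle C\rangle\ket\psi = iV_C(\psi) - i\langle C\rangle\ket\psi .
\end{equation*}
The leftover term $-i\langle C\rangle\ket\psi$ points along the gauge (phase) direction, so it vanishes as a tangent vector on $\mathbb{P}(\HH)$. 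On projective space the unprojected and projected brackets are therefore the same field, and $[V_A,V_B]=iV_C$ as vector fields on $\mathbb{P}(\HH)$. The analogous identities $[iV_A,V_B]=-V_C$ and $[iV_A,iV_B]=-iV_C$ hold by the remark following the lemma, again modulo gauge directions.

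Next I would set up the induction. Let $\mathfrak g_0=\mathcal L_2$ and $\mathfrak g_{r+1}=\mathfrak g_r+\operatorname{span}_{\mathbb R}\{-i[A,B]:A\in\mathfrak g_r,\,B\in\mathcal L_2\}$, working with Hermitian operators (the $\mathfrak{su}$ elements are $i$ times these). Since $\mathfrak g=\bigcup_r\mathfrak g_r$, it suffices to show $\mathfrak g_r\subseteq\mathfrak F$ for each $r$. The statement I would actually prove is the stronger one that both $V_X$ and $iV_X$ lie in $\mathfrak F$ for every $X\in\mathfrak g_r$ with $r\ge1$, while for $X\in\mathfrak g_0=\mathcal L_2$ only $V_X\in\mathfrak F$ is given. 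Linearity of $X\mapsto V_X$ and closure of $\mathfrak F$ under real linear combinations handle the spans.

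The inductive step then runs as follows. For $A\in\mathcal L_2$ and $B\in\mathcal L_2$, the bracket gives $iV_{-i[A,B]}\in\mathfrak F$. Bracketing $iV_C$ with $V_D$ for $D\in\mathcal L_2$ gives $-V_{-i[C,D]}\in\mathfrak F$. Alternating in this way, depth-$r$ commutators appear with a prefactor $i^{r}$ up to sign. Hence odd depth yields $iV_X$ and even depth yields $V_X$.

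Getting $V_X$ itself at odd depth is the main obstacle. For this I would use that $\mathfrak g$ is $\mathfrak{su}(2^N)$, or at least is closed under the commutator. Any odd-depth element can be rewritten as a sum of even-depth commutators: for instance, the two-local operators already satisfy $\mathfrak g=[\mathfrak g,\mathfrak g]$ because $\mathfrak{su}$ is perfect, and one more bracket with a suitable element of $\mathcal L_2$ shifts the parity. A parity-free way to say the same thing is to note that $\operatorname{span}\{V_X, iV_X : X\in\mathfrak g\}$ is closed under the bracket, then show that the $V$-part obtained from even depths already exhausts $\mathfrak g$. I would verify this concretely using $\mathfrak g=\mathfrak{su}(2^N)$ together with the fact that $[\mathfrak g_{\text{even}},\mathfrak g_{\text{odd}}]$ spans $\mathfrak g$. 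This gives $V_X\in\mathfrak F$ for all $X\in\mathfrak g$.
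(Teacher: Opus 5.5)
There is a genuine gap, and it sits at the parity shift, which is the whole content of the corollary. Your removal of the projection is correct: the leftover $-i\langle C\rangle\ket\psi$ is a phase direction and drops out on $\mathbb{P}(\HH)$. Your observation that bracketing with $V_B$, $B\in\mathcal L_2$, cycles $V\to iV\to -V\to -iV$ is also correct.

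Your justification for the parity shift does not work. Perfectness of $\mathfrak g$, even $\mathfrak g=\mathfrak{su}(2^N)$, says nothing about which depth parity reaches a given element; that depends on the generating set. Take one qubit with generators $\{X,Y\}$, which generate the perfect algebra $\mathfrak{su}(2)$. The automorphism $A\mapsto ZAZ$ negates both generators. So every even-depth commutator lies in $\mathrm{span}\{X,Y\}$ and every odd-depth one in $\mathrm{span}\{Z\}$. Correspondingly, the fields generated by $V_X,V_Y$ close on $\mathrm{span}\{V_X,V_Y,iV_Z\}$ and never produce $V_Z$. Hence your claim that $[\mathfrak g_{\mathrm{even}},\mathfrak g_{\mathrm{odd}}]$ spans $\mathfrak g$ does not follow from perfectness; as stated it is equivalent to what you are trying to prove. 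Closure of $\mathrm{span}\{V_X,iV_X\}$ under brackets only bounds $\mathfrak F$ from above, so it cannot help either. Your inductive hypothesis at $r=1$ (both $V_X$ and $iV_X$ for $X\in\mathfrak g_1\supseteq\mathcal L_2$) already presupposes the parity shift, so the induction has no established base case.

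The missing idea, which is what the paper uses, is a property of $\mathcal L_2$ itself. Every generator is a real combination of commutators of generators, $\mathcal L_2\subseteq\mathrm{span}\{-i[h,h']:h,h'\in\mathcal L_2\}$; for example $Z_j\propto -i[X_j,Y_j]$ and $X_jX_k\propto -i[Z_j,Y_jX_k]$. This is precisely what fails for $\{X,Y\}$. Combined with $[V_h,V_{h'}]=iV_{-i[h,h']}$, it gives $iV_h\in\mathfrak F$ for every $h\in\mathcal L_2$.

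Once you have that, your own induction closes immediately. If $V_X,iV_X\in\mathfrak F$, then $[V_X,V_B]=iV_{-i[X,B]}$ and $[iV_X,V_B]=-V_{-i[X,B]}$ put both fields for $-i[X,B]$ in $\mathfrak F$. This needs no parity bookkeeping and no appeal to $\mathfrak g=\mathfrak{su}(2^N)$. The paper packages the same input as two closure steps. First, $\{X: iV_X\in\mathfrak F\}$ is a subalgebra containing $\mathcal L_2$. Then, with all $iV_X$ available, $\{X: V_X\in\mathfrak F\}$ is also a subalgebra containing $\mathcal L_2$.
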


\begin{proof}
By the relations above, bracketing fields of type $V$ or $iV$ (on $\mathbb{P}(\HH)$) returns a field of the same two types, whose operator label is the corresponding commutator-generator $-i[\cdot,\cdot]$. We use this together with the fact that every generator is itself a commutator of generators: $-i[X_j,Y_j]\propto Z_j$, and on an edge $(j,k)$, $-i[Z_j,Y_jX_k]\propto X_jX_k$ (and similarly for every $1$- and $2$-local Pauli), so $\mathcal L_2\subseteq\{-i[h,h']:h,h'\in\mathcal L_2\}$.

\emph{Step 1 ($iV_X\in\mathfrak F$ for all $X\in\mathfrak g$).} By $[V_h,V_{h'}]=i\,V_{-i[h,h']}$, $\mathfrak F$ contains $iV_{-i[h,h']}$ for all $h,h'\in\mathcal L_2$; by the fact above these labels already include every generator, so $iV_h\in\mathfrak F$ for all $h\in\mathcal L_2$. By $[iV_A,iV_B]=-\,i\,V_{-i[A,B]}$, the set $\{X:iV_X\in\mathfrak F\}$ is closed under commutation; being a subalgebra that contains $\mathcal L_2$, it contains all of $\mathfrak g$.

\emph{Step 2 ($V_X\in\mathfrak F$ for all $X\in\mathfrak g$).} With every $iV_X$ now available, the identity $[iV_X,V_Y]=-V_{-i[X,Y]}$ shows that $\{X:V_X\in\mathfrak F\}$ is also closed under commutation: if $V_X,V_Y\in\mathfrak F$ then (as $iV_X\in\mathfrak F$) so is $V_{-i[X,Y]}$. This subalgebra contains the generators $\mathcal L_2$, hence equals $\mathfrak g$.
\end{proof}

For two-local Paulis on a connected graph $G$, $\mathfrak{g}=\mathfrak{su}(2^N)$: this is the Lie-algebraic content of the universality of one- and two-qubit gates~\cite{lloyd1996universal}. Concretely, brackets such as $\bigl[[X_jY_k,Z_kZ_l],Y_jX_k\bigr]\propto Z_jZ_l$ reach non-adjacent pairs and, on iteration, all Pauli strings.

\begin{lemma}[Spanning]\label{lem:span}
$\{V_X(\psi):X\in\mathfrak{su}(2^N)\}=T_\psi$ for every $\ket\psi$.
\end{lemma}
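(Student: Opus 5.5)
The plan is to show the real-linear map $X\mapsto V_X(\psi)=-(X-\langle X\rangle)\ket\psi=-P_\psi^\perp X\ket\psi$ from Hermitian operators onto $T_\psi$ is surjective, and then check that restricting to traceless $X$ loses nothing. Fix a unit vector $\ket\psi$ and an arbitrary horizontal target $\ket u\in T_\psi$, so $\braket{\psi}{u}=0$. The natural candidate generator is the rank-two Hermitian operator
\begin{equation*}
  X_u \;=\; -\bigl(\ket u\bra\psi+\ket\psi\bra u\bigr).
\end{equation*}
It satisfies $X_u\ket\psi=-\ket u-\ket\psi\braket{u}{\psi}=-\ket u$ and $\langle X_u\rangle=-2\,\mathrm{Re}\braket{\psi}{u}=0$. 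Hence $V_{X_u}(\psi)=-(X_u-0)\ket\psi=\ket u$, so every horizontal direction is hit.

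The second step is to land in $\mathfrak{su}(2^N)$, read as traceless Hermitian operators, consistent with the paper's convention $C=-i[A,B]$. The trace of $X_u$ is $-2\,\mathrm{Re}\braket{\psi}{u}=0$, so $X_u$ is already traceless. In any case, adding a multiple of the identity does not change the field, since $V_{X+c\mathbb 1}=-(X+c-\langle X\rangle-c)\ket\psi=V_X$. Thus the traceless projection $X-\tfrac{\mathrm{tr}X}{2^N}\mathbb 1$ gives the same vector even for a general Hermitian $X$.

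The reverse inclusion is immediate: $V_X(\psi)=-P_\psi^\perp X\ket\psi$ is orthogonal to $\ket\psi$ by construction, so it lies in $T_\psi$. Together the two inclusions give equality at every $\ket\psi$. Combined with Corollary~\ref{cor:closure} and $\mathfrak g=\mathfrak{su}(2^N)$, this gives $\mathfrak F(\psi)=T_\psi$ everywhere, which is what the orbit theorem argument requires.

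I expect no real obstacle. The only point needing care is the bookkeeping of real versus complex structure. Surjectivity must hold over $\mathbb R$, since the controls and brackets are real combinations. The construction handles this automatically because $\ket u$ was an arbitrary complex horizontal vector: for example, $i\ket u$ is realised by $X_{iu}$, and no complex scalars acting on $X$ are needed. A secondary check is that the Hamiltonian fields $iV_X$ in $\mathfrak F$ are not needed here, since the $V_X$ alone already span.
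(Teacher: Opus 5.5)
Your proof is correct and is essentially the paper's argument: the paper uses the same rank-two traceless Hermitian generator $X=\ket v\bra\psi+\ket\psi\bra v$, without your minus sign, so it obtains $V_X(\psi)=-\ket v$ and then uses that $T_\psi$ is a linear space. Your explicit checks of the reverse inclusion $V_X(\psi)\in T_\psi$ and of invariance under identity shifts are small additions that the paper leaves implicit.
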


\begin{proof}
Fix $\ket v\in T_\psi$, so $\braket{\psi}{v}=0$ (note, $\ket v$ is not necessarily normalized). Set $X=\ket v\bra\psi+\ket\psi\bra v$. Then $X$ is Hermitian and traceless, hence $X\in\mathfrak{su}(2^N)$; moreover $X\ket\psi=\ket v$ and $\langle X\rangle=0$, so $V_X(\psi)=-\ket v$. Every tangent vector is thus realised.
\end{proof}

Combining Corollary~\ref{cor:closure} and Lemma~\ref{lem:span}, $\mathfrak F$ contains $\{V_X:X\in\mathfrak{su}(2^N)\}$, whose evaluations span $T_\psi$ at every $\ket\psi$. The distribution is therefore full, and the orbit through any $\ket{\psi_0}$ is all of $\mathbb{P}(\HH)$. We conclude:

\medskip
\noindent\textbf{Proposition (universality).}
\emph{Adapt-H QITE with two-local domain $(k=2)$ on a connected coupling graph $G$ is universal for state preparation: for any pure states $\ket{\psi_0},\ket{\psi_\star}$ there is a finite sequence of adaptively chosen terms $h_1,\dots,h_n\in\mathcal L_2$ whose QITE steps carry $\ket{\psi_0}$ to $\ket{\psi_\star}$, exactly for the idealised continuous flow~\eqref{eq:control}, and to arbitrary accuracy for its finite-$\Delta\tau$ implementation.}

\medskip
\noindent Both kinds of finiteness are immediate: the orbit theorem reaches $\ket{\psi_\star}$ by a \emph{finite} concatenation of flows (exactly), and at any fixed accuracy each segment is realized by finitely many $\Delta\tau$-steps, the first-order integrator error scaling as $\mathcal{O}(\Delta\tau)$; only their \emph{number} is at issue, and is bounded now.

\medskip
\noindent The measure-decide-apply loop is, in control-theoretic language, the same reachability problem solved by engineered-dissipation and measurement-driven state preparation~\cite{verstraete2009,kraus2008,ticozzi2008}, in which a designed, here, adaptively chosen, generator drives the state toward a target; the universality statement above is its closed-system, gradient-flow instance.

\subsection{Cost: universality is not efficiency}

The quantum work complexity of adapt-H QITE follows closely the formula derived in Sec.~\ref{sec:overhead} and App.~\ref{app:k2overhead}. It must, however, account for the effect of Trotterization and the imprecisions it introduces, which worsen the bound. We now show where.

Emulating a circuit of $M=\mathcal{O}(ND)$ native gates in depth $D$, the work complexity is
\begin{equation}\label{eq:qite-master}
  \mathrm{Work}
  = \underbrace{M}_{\text{gates}}\cdot
    \underbrace{\mu}_{\text{steps/gate}}\cdot
    \underbrace{\mathcal{D}}_{\text{depth}}\cdot
    \underbrace{S(N,k)}_{\text{settings}}\cdot
    \underbrace{R}_{\text{shots/setting}},
\end{equation}
where $M\mu$ is the volume of the emulated circuit (the cost of one execution), $\mathcal{D}=\mathcal{O}(D\mu)$ is the number of non-parallelizable micro-steps (each closed by one system-wide tomography), $S(N,k)$ is the setting count of Eq.~\eqref{eq:settings}, and $R$ the shots per setting. For $\mu=1$ this is exactly the eigenstate-transformation overhead of App.~\ref{app:k2overhead}.

Two factors are specific to QITE. First, a first-order Trotter decomposition reproduces a single two-qubit gate to synthesis error $\xi$ in $\mathcal{O}(1/\xi)$ micro-steps~\cite{lloyd1996universal}; sharing an end-to-end error $\xi$ among the $M$ gates assigns each the budget $\xi/M$, so
\begin{equation}\label{eq:qite-mu}
  \mu = \mathcal{O}(M/\xi) = \mathcal{O}(ND/\xi).
\end{equation}
Second, every micro-step solves the QITE linear system~\cite{motta2020qite} from a $k$-body tomography of statistical accuracy $\delta$. As in App.~\ref{app:k2overhead}, the finite-$\delta$ solve perturbs the realized generator by $\mathcal{O}(\delta/\sigma_{\min})$, where $\sigma_{\min}$ is the smallest singular value of that system, the QITE counterpart of the spectral gap $\gamma_{\min}$ there. Since a micro-step applies $e^{-i\Delta\tau A}$ with $\Delta\tau=\mathcal{O}(\xi/M)$, the per-gate error is independent of $\Delta\tau$ at $\mathcal{O}(\delta/\sigma_{\min})$, and accumulating over the $M$ gates a target accuracy fixes $\delta=\mathcal{O}(\xi\sigma_{\min}/ND)$, the relation of App.~\ref{app:k2overhead} under $\gamma_{\min}\to\sigma_{\min}$. The shot count of Sec.~\ref{sec:overhead} is then $R=\mathcal{O}(\delta^{-2}k\log N)=\mathcal{O}(N^2D^2 k\log N/\xi^2\sigma_{\min}^2)$.

Substituting into Eq.~\eqref{eq:qite-master},
\begin{equation}\label{eq:qite-cost}
\begin{split}
  \mathrm{Work}
  &= \mathcal{O}\!\left(\frac{3^k k^2\Delta^{k-1}\,N^5 D^6\log N}{\xi^4\,\sigma_{\min}^2}\right) \\
  &= \mathcal{O}\!\left(\frac{N^5 D^6\log N}{\xi^4\,\sigma_{\min}^2}\right)
\end{split}
\end{equation}
for fixed $k$ and bounded degree $\Delta$: polynomial in $N$ and $D$ whenever $\sigma_{\min}$ is inverse-polynomial. The extra factor $M^2=N^2D^2$ and the two extra powers of accuracy relative to the $\epsilon^{-2}$ of App.~\ref{app:k2overhead} are the price of Trotterization: each gate expands into $\mu=\mathcal{O}(M/\xi)$ micro-steps, and the algorithm depth $\mathcal{D}=\mathcal{O}(D\mu)$ carries this factor a second time. With $\sigma_{\min}$ playing the role of App.~\ref{app:k2overhead}'s $\gamma_{\min}$, the two bounds differ only by these Trotter factors. It can be considered to be the cost of using a circuit-based model in the back of a Motte model in the context of QITE. It follows that:
\begin{itemize}
    \item States preparable by polynomial-depth circuits, along trajectories of bounded correlation length and with $\sigma_{\min}$ inverse-polynomial, admit polynomial-cost adapt-H QITE protocols.
    \item Haar-typical states require sequences of exponential length, or drive $\sigma_{\min}$ exponentially small, and cannot be prepared efficiently.
\end{itemize}
The bounded-$k$ Motte parameter and QITE's domain $k$ together set the per-step cost (polynomial in $N$, exponential in $k$); the total cost is set by the trajectory.

\end{document}

%% file: fig-motte-schematic.tex
\begin{tikzpicture}[
    >={Stealth[length=5pt]},
    font=\footnotesize,
    box/.style={
        rounded corners=2pt,
        draw,
        thick,
        minimum width=3.4cm,
        minimum height=8.5mm,
        align=center,
        inner sep=2pt
    },
    userbox/.style={box, fill=blue!8},
    ifacebox/.style={box, fill=orange!10, minimum width=5.0cm},
    qpubox/.style={box, fill=green!10},
    flowarr/.style={->, line width=0.9pt, shorten >=1pt, shorten <=1pt},
    arclbl/.style={
        font=\scriptsize\itshape,
        align=center,
        fill=white,
        inner sep=2pt,
        text width=1.7cm
    },
    extlbl/.style={
        font=\scriptsize\bfseries\itshape,
        align=center,
        text width=1.5cm,
        inner sep=2pt
    }
]

  \node[userbox]  (user)  at (0,0)    {\textbf{User}\\\scriptsize{(classical)}};
  \node[ifacebox] (iface) at (0,-2.8) {\textbf{Motte interface}\\\scriptsize{(classical)}};
  \node[qpubox]   (qpu)   at (0,-5.6) {\textbf{QPU}\\\scriptsize{gate model on $G$}};

  \node[extlbl] (goal)  at (-3.1, 0)    {end\\target};
  \node[extlbl] (init)  at (-3.1,-5.6)  {initial\\state};
  \node[extlbl] (final) at ( 3.1,-5.6)  {final\\sampling};

  \draw[flowarr] (goal.east)  -- (user.west);
  \draw[flowarr] (init.east)  -- (qpu.west);
  \draw[flowarr] (qpu.east)   -- (final.west);

  \draw[flowarr] (user.south east) to[bend left=65]
        node[arclbl] {target $k$-body\\Paulis\\\mbox{($f$ optional)}} (iface.north east);
  \draw[flowarr] (iface.south east) to[bend left=65]
        node[arclbl] {transpiled\\gates on\\edges of $G$} (qpu.north east);
  \draw[flowarr] (qpu.north west)  to[bend left=65]
        node[arclbl] {shot\\statistics} (iface.south west);
  \draw[flowarr] (iface.north west) to[bend left=65]
        node[arclbl] {$\le k$-body\\$\langle$Pauli$\rangle$\\\mbox{($\pm\delta$)}} (user.south west);

\end{tikzpicture}